\newcommand{\real}{\mathbb R}
\newcommand{\E}{\mathbb{E}}
\newcommand{\V}{\mathrm{Var}}
\newcommand{\C}{\mathrm{Cov}}
\newcommand{\ve}{\mathrm{vec}}
\newcommand{\iGCV}{\textnormal{iGCV}}
\newtheorem{lem}{Lemma}
\newtheorem{prop}{Proposition}
\def\bB{\mathbf{B}}
\def\bb{\mathbf{b}}
\def\bC{\mathbf{C}}
\def\bD{\mathbf{D}}
\def\bF{\mathbf{F}}
\def\bL{\mathbf{L}}
\def\bbf{\mathbf{f}}
\def\bI{\mathbf{I}}
\def\bP{\mathbf{P}}
\def\bQ{\mathbf{Q}}
\def\bU{\mathbf{U}}
\def\bu{\mathbf{u}}
\def\bV{\mathbf{V}}
\def\bS{\mathbf{S}}
\def\bG{\mathbf{G}}
\def\bg{\mathbf{g}}
\def\bX{\mathbf{X}}
\def\bx{\mathbf{x}}
\def\by{\mathbf{y}}
\def\bs{\mathbf{s}}
\def\bw{\mathbf{w}}
\def\bmu{\boldsymbol{\mu}}
\def\balpha{\boldsymbol{\alpha}}
\def\bbeta{\boldsymbol{\beta}}
\def\bGamma{\boldsymbol{\Gamma}}
\def\bSigma{\boldsymbol{\Sigma}}
\def\bTheta{\boldsymbol{\Theta}}
\def\btheta{\boldsymbol{\theta}}
\def\bPhi{\boldsymbol{\Phi}}
\def\bPsi{\boldsymbol{\Psi}}
\def\bLambda{\boldsymbol{\Lambda}}
\def\bxi{\boldsymbol{\xi}}
\def\bfeta{\boldsymbol{\eta}}
\def\ba{\boldsymbol{a}}
\def\bChat{\widehat{\mathbf{C}}}
\def\bBtilde{\widetilde{\mathbf{B}}}
\def\bftilde{\widetilde{\mathbf{f}}}
\def\bCtilde{\widetilde{\mathbf{C}}}
\def\bLtilde{\widetilde{\mathbf{L}}}
\def\bdtilde{\widetilde{\mathbf{d}}}
\def\bThetatilde{\widetilde{\mathbf{\Theta}}}
\def\bbtilde{\widetilde{\bb}}
\def\bPhitilde{\widetilde{\boldsymbol{\Phi}}}
\def\bPtilde{\widetilde{\mathbf{P}}}
\def\bbftilde{\widetilde{\mathbf{f}}}
\def\brho{\boldsymbol{\rho}}
\def\T{\mathcal{T}}
\def\H{\mathcal{H}}
\def\bZ{\mathbf{Z}}
\title{Fast Covariance Estimation for Multivariate Sparse Functional Data}
\author{Cai Li$^{1,*}$,
	Luo Xiao$^{1}$ and Sheng Luo$^{2}$\\
	$^{1}$ North Carolina State University and $^{2}$Duke University\\
	$^{*}$email: cli9$@$ncsu.edu
}
\begin{document}
\maketitle

\abstract
Covariance estimation is essential yet underdeveloped for analyzing multivariate functional data.
We propose a fast covariance estimation method for multivariate sparse functional data using bivariate penalized splines.
The tensor-product B-spline formulation of the proposed method
enables a simple spectral decomposition of the associated covariance operator 
and explicit expressions of the resulting eigenfunctions as linear combinations of B-spline bases,
thereby dramatically facilitating subsequent principal component analysis.
We derive a fast algorithm for selecting the smoothing parameters in covariance smoothing
 using leave-one-subject-out cross-validation. % and provide associated software.
%Moreover, we establish a direct relationship between multivariate eigenfunctions and known basis functions, and obtain the eigenpairs through a one-time spectral decomposition. %A multivariate extension of the conditional expectation approach is derived for prediction. 
The method is evaluated with extensive numerical studies and applied to an Alzheimer's disease study with multiple longitudinal outcomes.

\noindent{\bfseries Keywords:} {\em Bivariate smoothing, Covariance function, Functional principal component analysis, Longitudinal data, Multivariate functional data, Prediction.}

%%%%%%%%%%%%%%%%%%%%%%%%%%%%%%%
\section{Introduction}
%%%%%%%%%%%%%%%%%%%%%%%%%%%%%%%
Functional data analysis (FDA) has been enjoying great successes in many applied fields, e.g.,
neuroimaging \citep{Reiss:10, Lindquist:12, Goldsmith:12, zhu2012multivariate},  genetics \citep{Leng:06, Reimherr:14, Reimherr:16}, and wearable computing \citep{Morris:06, Xiao:15}. 
Functional principal component analysis (FPCA) conducts dimension reduction on the inherently infinite-dimensional functional data,
and thus facilitates subsequent modeling and analysis.
Traditionally, functional data are densely observed on a common grid and can be easily connected to multivariate data, 
although the notion of smoothness distinguishes the former from the latter.
In recent years,  covariance-based FPCA \citep{Yao:05a} has become a standard approach
and has greatly expanded the applicability of functional data methods to irregularly spaced data such as longitudinal data.
Various nonparametric methods have now been proposed to estimate the smooth covariance function, e.g., 
\cite{Peng:09}, \cite{cai2010nonparametric}, \cite{Goldsmith:12}, \cite{Xiao:16b}
and \cite{Wong2019}.

There has been  growing interest in multivariate functional data  %\citep{chiou2016multivariate, park2017clustering, zhu2017multivariate, wong2017partially}
where multiple functions are observed for each subject. 
For dense functional data,
\citet[Chapter 8.5]{Ramsay:05}
proposed to concatenate multivariate functional data as a single vector and conduct multivariate PCA on the long vectors
and \cite{Berrendero:11} repeatedly applied point-wise univariate PCA. 
For sparse and paired functional data, \cite{zhou2008joint} extended the low-rank mixed effects model in \cite{James:00}.
\cite{Chiou:14} considered normalized multivariate FPCA through standardizing the covariance operator. 
\cite{petersen2016frechet} proposed various metrics for studying cross-covariance between multivariate functional data.
More recently, \cite{Happ:17} introduced a FPCA framework for multivariate functional data defined on different domains.

The interest of the paper is  functional principal component analysis for multivariate sparse functional data, 
where multiple responses are observed at time points that vary from subjects to subjects and may even vary between responses within subjects.
There are much fewer works to handle such data. 
The approach in \cite{zhou2008joint} focuses on bivariate functional data and
can be extended to more than two-dimensional functional data, although model selection (e.g., selection of smoothing parameters) can
be computationally difficult and convergence of the expectation-maximization estimation algorithm could also be an issue.
The local polynomial method in \cite{Chiou:14} can be applied to multivariate sparse functional data,
although a major drawback is the selection of multiple bandwidths. Moreover, because
the local polynomial method is a local approach, there is no guarantee that the resulting estimates of covariance
functions will lead to a properly defined covariance operator. 
The approach in \cite{Happ:17} (denoted by mFPCA hereafter) estimates cross-covariances via scores from univariate FPCA and hence can be
applied to multivariate sparse functional data. 
While mFPCA is theoretically sound for dense functional data,
it may not capture cross-correlations between functions
because scores from univariate FPCA for sparse functional data 
are shrunk towards zero.

We propose a novel and fast covariance-based FPCA method for multivariate sparse functional data. 
Note that multiple auto-covariance functions for within-function correlations and
cross-covariance functions for between-function correlations have to be estimated.
Tensor-product B-splines are employed to approximate the covariance functions
and a smoothness penalty as in bivariate penalized splines \citep{Eilers:03} is adopted
to avoid overfit. 
Then the individual estimates of covariance functions will be pooled and refined.
The advantages of the new method are multifold. 
First, the tensor-product B-spline formulation is computationally efficient to handle multivariate sparse functional data.
Second, a fast fitting algorithm for selecting the smoothing parameters will be derived, 
which alleviates the computational burden of conducting leave-one-subject-out cross-validation.
Third, the tensor-product B-spline representation of the covariance functions enables a straightforward spectral decomposition 
of the covariance operator for the multivariate functional data; see Proposition \ref{eq:eigen}. In particular, the eigenfunctions associated with
the covariance operator are explicit functions of the B-spline bases.
Last but not the least, via a simple truncation step, the refined estimates of the covariance functions
lead to a properly defined covariance operator.

Compared to mFPCA, the proposed method does not rely on  scores from univariate FPCA, which could be
a severe problem for sparse functional data, and hence
could better capture the correlations between functions. 
And an improved correlation estimation
will lead to improved subsequent FPCA analysis and curve prediction.
The proposed method also compares favorably with the local polynomial method in \cite{Chiou:14} because of
the computationally efficient tensor-product spline formulation of the covariance functions and
the derived fast algorithm for selecting the smoothing parameters.
Moreover, as mentioned above, 
there exists  an explicit and easy-to-calculate relationship between
the tensor-product spline representation of covariance functions
and the associated eigenfunctions/eigenvalues, which greatly facilitates 
subsequent FPCA analysis.

In addition to FPCA, there are also abundant literatures on models for multivariate functional data with most focusing on dense functional data.
For clustering of multivariate functional data, see \cite{Zhu:12, jacques2014model,huang2014joint} and \cite{park2017clustering}.
For regression with multivariate functional responses, see \cite{zhu2012multivariate, luo2017function, li2017functional, wong2017partially, zhu2017multivariate, kowal2017bayesian} and \cite{qi2018function}.
Graphical models for multivariate functional data are studied in
\cite{zhu2016bayesian} and \cite{qiao2017functional}.
Works on multivariate functional data include also \cite{chiou2014, chiou2016pairwise}.

The remainder of the paper proceeds as follows. In Section \ref{sec:method}, we present our proposed method.
We conduct extensive simulation studies in Section \ref{sec:simulation}
and apply the proposed method to an Alzheimer's disease study in Section \ref{sec:application}. 
A discussion is given in Section \ref{sec:discussion}. 
All  technical details are enclosed in the Appendix.

%%%%%%%%%%%%%%%%%%%%%%%%%%%%%%%
\section{Methods} \label{sec:method}
%%%%%%%%%%%%%%%%%%%%%%%%%%%%%%%
%%%%%%%%%%%%%%%%%%%%%%%%%%%%%%%
\subsection{Fundamentals of Multivariate Functional Principal Component Analysis}
%%%%%%%%%%%%%%%%%%%%%%%%%%%%%%%
Let $p$ be a positive integer and denote by $\T$ a continuous and bounded domain in the real line $\real$.
Consider the Hilbert space $\H:\underbrace{L^2(\T) \times \ldots \times L^2(\T)}_{p}$
equipped with the inner product $<\cdot,\cdot>_{\H}$ and norm $\|\cdot\|_{\H}$
such that for arbitrary functions $\bbf = \left(f^{(1)},\ldots,f^{(p)}\right)^{\top}$ and $\bg = \left(g^{(1)},\ldots,g^{(p)}\right)^{\top}$ in $\H$ with each element in $L^2(\T)$, 
$<\bbf, \bg>_{\H} = \sum_{k=1}^p \int f^{(k)}(t) g^{(k)}(t) dt$ and  $\|\bbf\|_{\H} = <\bbf,\bbf>_{\H}^{1/2}$.
Let $\left\{x^{(k)}\right\}_{k = 1,\ldots,p}$ be a set of $p$ random functions with each function in $L^2(\T)$.
Assume that the $p$-dimensional vector $\bx(t) = \left(x^{(1)},\ldots, x^{(p)}\right)^{\top} \in \real^p$ 
has a $p$-dimensional smooth mean function, $\bmu(t) = \E\{\bx(t)\} = \left(\E\left\{x^{(1)}(t)\right\}, \ldots, \E\left\{x^{(p)}(t)\right\}\right)^{\top} = \left(\mu^{(1)}(t), \ldots, \mu^{(p)}(t)\right)^{\top}$. 
Define the covariance function as $\bC(s,t) = \E\left\{ (\bx(s) - \bmu(s)) (\bx(t) - \bmu(t))^{\top} \right\} = \left[ C_{kk^{\prime}}(s,t)\right]_{1\leq k,k^{\prime}\leq p}$ and $C_{kk^{\prime}}(s, t) = \C\left\{x^{(k)}(s),x^{(k^{\prime})}(t)\right\}$.
Then the covariance operator $\bGamma: \H \rightarrow \H$ associated with the kernel $\bC(s,t)$ can be defined
such that for any $\bbf \in \H$, the $k$th element of $\bGamma \bbf$ is given by
\begin{eqnarray*}
	(\bGamma\bbf)^{(k)}(s) = <\bC_k(s,\cdot), \bbf>_{\H}  = \sum_{k^{\prime}=1}^p \int C_{kk^{\prime}}(s,t) f^{(k^{\prime})}(t) dt,
\end{eqnarray*}
where $\bC_k(s,t) = (C_{k1}(s,t),\ldots,C_{kp}(s,t))^{\top}$. 
Note that $\bGamma$ is a linear, self-adjoint, compact and non-negative integral operator.
By the Hilbert-Schmidt theorem, there exists a set of orthonormal bases $\{\bPsi_{\ell}\}_{\ell \geq 1} \in \H$, $\bPsi_{\ell} = \left(\Psi_{\ell}^{(1)},\ldots, \Psi_{\ell}^{(p)}\right)^{\top}$, and $<\bPsi_{\ell}, \bPsi_{\ell^{\prime}}>_{\H} = \sum_{k=1}^p\int \Psi_{\ell}^{(k)}(t)\Psi_{\ell^{\prime}}^{(k)}(t) dt = {1}_{\left\{\ell=\ell^{\prime}\right\}}$, such that 
\begin{eqnarray} \label{eq:cov_operator}
	(\bGamma\bPsi_{\ell})^{(k)}(s) =  \sum_{k^{\prime}=1}^p \int C_{kk^{\prime}}(s,t) \Psi_{\ell}^{(k^{\prime})}(t) dt = d_{\ell} \Psi_{\ell}^{(k)}(s),
\end{eqnarray}
where $d_{\ell}$ is the $\ell$th largest eigenvalue corresponding to $\bPsi_{\ell}$.
Then the multivariate Mercer's theorem gives
\begin{eqnarray} \label{eq:mult_cov}
	\bC(s,t) = \sum_{\ell}^{\infty}d_{\ell}\bPsi_{\ell}(s)\bPsi_{\ell}^{\top}(t),
\end{eqnarray}
where $C_{k k^{\prime}}(s,t) = \sum_{\ell=1}^{\infty}d_{\ell}\Psi_{\ell}^{(k)}(s)\Psi_{\ell}^{(k^{\prime})}(t)$.
As shown in \cite{Saporta:81}, $\bx(t)$ has the multivariate Karhunen-Lo\`eve  representation, $\bx(t) = \bmu(t) + \sum_{\ell=1}^{\infty}\xi_{\ell}\bPsi_{\ell}(t)$,  where  $\xi_{\ell} = <\bx - \bmu, \bPsi_{\ell}>_{\H}$ are the  scores  with $\E(\xi_{\ell}) = 0$ and $\E(\xi_{\ell}\xi_{\ell^{\prime}}) = d_{\ell}{1}_{\left\{\ell=\ell^{\prime}\right\}}$.
The covariance operator $\bGamma$ has the positive semi-definiteness property, i.e,
for any $\ba = (a_1,\ldots, a_p)^{\top}\in\real^p$, the covariance function of $\ba^{\top}\bx$,
denoted by $C_{\ba}(s,t)$,
satisfies that for any sets of time points $(t_1,\ldots, t_q)\subset \T$ with 
an arbitrary positive integer $q$, the square matrix $[C_{\ba}(t_i, t_j)]_{\{1\leq i, j\leq q\}}\in \real^{q\times q}$ is positive semi-definite.

%%%%%%%%%%%%%%%%%%%%%%%%%%%%%%%
\subsection{Covariance Estimation by Bivariate Penalized Splines}
%%%%%%%%%%%%%%%%%%%%%%%%%%%%%%%
Suppose that the observed data take the form $\left\{\left(y_{ij}^{(k)},t_{ij}^{(k)}\right): i = 1, \ldots, n;\,\, k = 1, \ldots, p;\,\, j=1,\ldots, m_{ik}\right\}$,  where $t_{ij}^{(k)} \in \T$ is the observed time point,
$y_{ij}^{(k)}$ is the observed $k$th  response,
$n$ is the number of subjects, and $m_{ik}$ is the number of observations for subject $i$'s $k$th response. 
The model is 
\begin{eqnarray}\label{eq:model}
	y_{ij}^{(k)}= x_i^{(k)}\left(t_{ij}^{(k)}\right)+ \epsilon_{ij}^{(k)}= \mu^{(k)}\left(t_{ij}^{(k)}\right) + \sum_{\ell=1}^{\infty}\xi_{i \ell}\Psi_{\ell}^{(k)}\left(t_{ij}^{(k)}\right) + \epsilon_{ij}^{(k)},
\end{eqnarray}
where $\bx_{i}(t) = \left(x_i^{(1)}(t), \ldots, x_i^{(p)}(t)\right)^{\top}\in \H$,   $\epsilon_{ij}^{(k)}$ are random noises with zero means and variances $\sigma_{k}^2$ and  are  independent across $i, j$ and $k$. 

The goal is to estimate the  covariance functions $C_{k k'}$. 
We adopt a three-step procedure.
In the first step,  empirical estimates of the covariance functions are constructed. 
Let  $r_{ij}^{(k)} = y_{ij}^{(k)} - \mu^{(k)}\left(t_{ij}^{(k)}\right)$ be the residuals and $C_{i j_1 j_2}^{(kk^{\prime})} = r_{ij_1}^{(k)}r_{ij_2}^{(k^{\prime})}$ be the auxiliary variables. Note that   
$\E\left(C_{i j_1 j_2}^{(kk^{\prime})}\right) = C_{kk^{\prime}}\left(t_{ij_1}^{(k)}, t_{ij_2}^{(k')}\right) + \sigma_k^2 {1}_{\{k=k^{\prime},j_1=j_2\}}$ for $1 \leq j_1\leq m_{ik}, 1\leq j_2\leq m_{ik'}$.
Thus, $C_{ij_1j_2}^{(kk')}$ is an unbiased estimate of $C_{kk'}\left(t_{ij_1}^{(k)}, t_{ij_2}^{(k')}\right) $ whenever $k \neq k^{\prime}$ or $j_1 \neq j_2$.
In the second step, the noisy auxiliary variables are smoothed to obtain smooth estimates of the covariance functions.
For smoothing, we  use bivariate {\it P}-splines \citep{Eilers:03} because  it is an automatic smoother and is computationally simple.
In the final step, we pool all estimates of the individual covariance functions and use an extra step of eigendecomposition 
to obtain refined estimates of covariance functions. The refined estimates lead to a covariance operator that is properly defined, i.e.,  positive semi-definite.
In practice, the mean functions $\mu^{(k)}$s are unknown and we estimate them using  $P$-splines \citep{Eilers:96}
with the smoothing parameters selected by leave-one-subject-out cross validation;
see Appendix A for details.
Denote the estimates by $\widehat{\mu}^{(k)}$.
Let $\widehat{r}_{ij}^{(k)} = y_{ij}^{(k)} - \widehat{\mu}^{(k)}\left(t_{ij}^{(k)}\right)$ and $\widehat{C}_{i j_1 j_2}^{(kk^{\prime})} = \widehat{r}_{i j_1}^{(k)}\widehat{r}_{i j_2}^{(k^{\prime})}$, the actual auxiliary variables.

The bivariate {\it P}-splines model $C_{k k^{\prime}}(s,t)$ uses tensor-product splines $G_{k k^{\prime}}(s,t)$ for $1 \leq k, k^{\prime} \leq p$.
Specifically, $G_{k k^{\prime}}(s,t) = \sum_{1\leq\gamma_1,\gamma_2\leq c}\theta_{\gamma_1\gamma_2}^{(k k^{\prime})} B_{\gamma_1}(s)B_{\gamma_2}(t)$, where $\bTheta_{k k^{\prime}} = \left[\theta_{\gamma_1\gamma_2}^{(k k^{\prime})}\right]_{1\leq\gamma_1,\gamma_2\leq c}\in\mathbb{R}^{c\times c}$
is a coefficient matrix, $\{B_1(\cdot), \ldots, B_c(\cdot)\}$ is the collection of B-spline basis functions in $\T$, and $c$ is the number of equally-spaced interior knots plus the order (degree plus $1$) of the B-splines. 
Because $C_{kk'}(s,t) = C_{k'k}(t,s) = \text{Cov}\left\{x^{(k)}(s), x^{(k')}(t)\right\}$,
it is reasonable to impose
the assumption that 
\begin{equation*}
	%\label{eq:symm}
	\bTheta_{kk'} = \bTheta_{k'k}^{\top}
\end{equation*}
so that $G_{kk'}(s,t) = G_{k'k}(t,s)$. Therefore, in the rest of the section, we consider only $k \leq k'$.

Let $\bD\in\real^{(c-2)\times c}$ denote a second-order differencing matrix such that for a vector $\mathbf{a}=(a_1,\ldots, a_c)^{\top}\in\real^c$,
$\bD\mathbf{a} = (a_3-2a_2+a_1, a_4 -2 a_3 + a_2,\ldots, a_c - 2a_{c-1} + a_{c-2})^{\top}\in\real^{c-2}$.
Also let $\|\cdot\|_F$ be the Frobenius norm.
For the cross-covariance function $C_{kk^{\prime}}(s,t)$ with $k < k^{\prime}$, 
the bivariate {\it P}-splines estimate the coefficient matrix $\bTheta_{kk'}$ by $\widehat{\bTheta}_{kk'}$ which minimizes the penalized least squares
\begin{eqnarray}\label{eq:cross-cov}
	\sum_{i=1}^n \sum_{1 \leq j_1\leq m_{ik}} \sum_{ 1\leq  j_2 \leq m_{ik'}} \left\{G_{kk^{\prime}}\left(t_{ij_1}^{(k)}, t_{ij_2}^{(k')}\right) - \widehat{C}_{i j_1 j_2}^{(kk^{\prime})}\right\}^2 + \lambda_{kk^{\prime}1} \|\bD\bTheta_{kk^{\prime}}\|_F^2 + \lambda_{kk^{\prime}2} \|\bD\bTheta_{kk^{\prime}}^{\top}\|_F^2,
\end{eqnarray} 
where 
$\lambda_{kk^{\prime}1}$ and $\lambda_{kk^{\prime}2}$ are two nonnegative smoothing parameters that balance the model fit and smoothness of the estimate
and will be determined later.
Indeed, the column penalty $\|\bD\bTheta_{kk'}\|_F^2$ penalizes the 2nd order consecutive differences of the columns of $\bTheta_{kk'}$
and similarly, the row penalty $\|\bD\bTheta_{kk'}^{\top}\|_F^2$ penalizes the 2nd order consecutive differences of the rows of $\bTheta_{kk'}$.
The two penalty terms are essentially penalizing the 2nd order partial derivatives of $G_{kk'}(s,t)$ along the $s$ and $t$ directions, respectively.
The two smoothing parameters are allowed to differ to accommodate different levels of smoothing along the two directions.

For the auto-covariance functions $C_{kk}(s,t)$ with $k = 1, \ldots, p$, 
we conduct bivariate covariance smoothing by enforcing the following constraint on the coefficient matrix $\bTheta_{kk}$ \citep{Xiao:16b},
\begin{equation}\label{eq:symm}
	\bTheta_{kk} = \bTheta_{kk}^{\top}.%\theta_{\gamma_1\gamma_2}^{(kk)} = \theta_{\gamma_2\gamma_1}^{(kk)}, 1 \leq \gamma_1,\gamma_2 \leq c.
\end{equation}
It follows that $G_{kk}(s,t)$ is a symmetric function.
Then the coefficient matrix $\bTheta_{kk}$ and the error variance $\sigma_k^2$ are jointly estimated by $\widehat{\bTheta}_{kk}$ and $\widehat{\sigma}_k^2$,
which minimize the penalized least squares
\begin{eqnarray}\label{eq:auto-cov}
	\sum_{i=1}^n  \sum_{1 \leq j_1, j_2 \leq m_{ik}}\left\{G_{kk}\left(t_{ij_1}^{(k)}, t_{ij_2}^{(k)}\right) + \sigma_k^2 {1}_{\{j_1=j_2\}} - \widehat{C}_{i j_1 j_2}^{(kk)}\right\}^2 + \lambda_k 
	\|\bD\bTheta_{kk}\|_F^2,
\end{eqnarray}
over all symmetric $\bTheta_{kk}$ and $\lambda_k$ is a smoothing parameter. Note that the two penalty terms in \eqref{eq:cross-cov}
become the same when $\bTheta_{kk}$ is symmetric and thus only one smoothing parameter is needed for auto-covariance estimation.

\subsubsection{Estimation}
%%%%%%%%%%%%%%%%%%%%%%%%%%%%%%%
%We estimate the coefficient matrix by penalized least squares. We now briefly describe the estimation procedure as follows.

%First, we estimate $\balpha_k$ for each $k$.
%Let $\by_{i}^{(k)} = \left(y_{i1}^{(k)},\cdots,y_{i m_i}^{(k)}\right)^{\top}$, $\by^{(k)} = \left({\by_{1}^{(k), \top}}, \cdots, {\by_{n}^{(k), \top}}\right)^{\top}$.
%Define $\bb_i$ as the basis matrix $\bb_i = [\bb(t_{i1}),\cdots, \bb(t_{im_i})]^{\top}$ and $\bb = [\bb_1^{\top},\cdots, \bb_n^{\top}]^{\top}$. Then,  can be rewritten as 
%\begin{eqnarray} \label{eq:meanPLS}
%\left(\by^{(k)} - \bb\balpha_k\right)^{\top}\left(\by^{(k)} - \bb\balpha_k\right) + \tau_k \balpha_k^{\top}  \bP_0\balpha_k,
%\end{eqnarray}
%where $\bP_0 = \bD\bD^{\top}$. Hence, $\widehat{\balpha}_k$ is given by $\widehat{\balpha}_k = (\bb^{\top}\bb + \tau_k\bP_0)^{-1}\bb^{\top}\by^{(k)}$.

We first introduce the notation. 
Let $\ve(\cdot)$ be an operator that
stacks the columns of a matrix into a column vector
and denote by $\otimes$ the Kronecker product.
Fix $k$ and $k'$ with $k\leq k'$.
Let $\btheta_{kk^{\prime}} = \ve(\bTheta_{kk^{\prime}})\in \real^{c^2}$ be a vector of the coefficients
and $\bb(t) = \{B_1(t), \ldots, B_c(t)\}^{\top}\in \real^c$ denotes the B-spline base.
Then  $$G_{k k^{\prime}}(s,t) = \bb(s)^{\top}\bTheta_{kk'}\bb(t) = \{\bb(t) \otimes \bb(s)\}^{\top} \btheta_{kk^{\prime}}.$$
We now organize the auxiliary responses $\widehat{C}_{ij_1j_2}^{(kk')}$ for each pair of $k$ and $k'$. Let $\mathbf{r}_i^{(k)} = \left(r_{i1}^{(k)},\ldots, r_{im_{ik}}^{(k)}\right)^{\top}\in\real^{m_{ik}}$,
$\bChat_{i}^{(kk^{\prime})} 
= \mathbf{r}_i^{(k)} \otimes \mathbf{r}_i^{(k')}\in \real^{m_{ik}m_{ik'}}$
and 
$\bChat^{(kk^{\prime})} = \left(\bChat_{1}^{(kk^{\prime}), \top},\ldots, \bChat_{n}^{(kk^{\prime}), \top}\right)^{\top}\in \real^{N_{kk'}}$, 
where $N_{kk'} = \sum_{i=1}^n m_{ik} m_{ik'}$
is the total number of auxiliary responses for the pair of $k$ and $k'$.
As for the B-splines,
let $\bb_{i}^{(k)} =\left [\bb\left(t_{i1}^{(k)}\right),\ldots, \bb\left(t_{im_{ik}}^{(k)}\right)\right]\in \real^{c \times m_{ik}}$,
$\bB_{i}^{(kk')} = \left(\bb_i^{(k')}\otimes \bb_i^{(k)}\right)^{\top}
\in \real^{(m_{ik}m_{ik'})\times c^2}$, and
$\bB^{(kk')} = \left[\bB_1^{(kk'),\top},\ldots, \bB_n^{(kk'),\top}\right]^{\top}\in \real^{N_{kk'}\times c^2}$.
%Note that the above B-spline design matrices are the same for every pair of $k$ and $k'$,
%thus dramatically reducing the computations.

For  estimation of the cross-covariance functions $C_{kk'}$ with $k < k^{\prime}$,
the penalized least squares in \eqref{eq:cross-cov} can be rewritten as
\begin{eqnarray}\label{eq:cross-covPLS}
	\left(\bChat^{(kk^{\prime})} - \bB^{(kk')}\btheta_{kk^{\prime}} \right)^{\top}\left(\bChat^{(kk^{\prime})} - \bB^{(kk')}\btheta_{kk^{\prime}}\right) + \lambda_{kk^{\prime}1} \btheta_{kk^{\prime}}^{\top}\bP_1\btheta_{kk^{\prime}} + \lambda_{kk^{\prime}2} \btheta_{kk^{\prime}}^{\top}\bP_2\btheta_{kk^{\prime}},
\end{eqnarray} 
where $\bP_1 = \bI_c \otimes \bD^{\top}\bD$ and $\bP_2 = \bD^{\top}\bD \otimes  \bI_c$. 
The expression in \eqref{eq:cross-covPLS} is a quadratic function of the coefficient vector $\btheta_{kk'}$.
Therefore,  we derive that
$$\widehat{\btheta}_{kk^{\prime}} = \left(\bB^{(kk'), \top}\bB^{(kk')} + \lambda_{kk^{\prime}1}\bP_1 + \lambda_{kk^{\prime}2} \bP_2\right)^{-1}\bB^{(kk'), \top}\bChat^{(kk^{\prime})}$$
and the estimate of the cross-covariance function $C_{kk'}(s,t)$ is $\widehat{C}_{kk'}(s,t) = \{\bb(t) \otimes \bb(s)\}^{\top} \widehat{\btheta}_{kk^{\prime}}$.

For estimation of the auto-covariance functions, because of the constraint on the coefficient matrix in~\eqref{eq:symm}, 
let $\bfeta_k \in \real^{c(c+1)/2}$ be a vector obtained by stacking the columns of the lower triangle of $\bTheta_{kk}$
and let $\bG_c\in \real^{c^2\times c(c+1)/2}$ be a duplication matrix  such that $\btheta_{kk} = \bG_c \bfeta_k$ (Page 246, \citealt{Seber:07}). 
%Then we have $G_{k k}(s,t) = \{\bb(t) \otimes \bb(s)\}^{\top}\bG_c \bfeta_k$.
%Let $\bX_i = [\bB_i\bG_c, \bdelta_i]$, $\bX = [\bX_1^{\top},\ldots, \bX_n^{\top}]^{\top} = [\bB\bG_c, \bdelta]$, and
Let $\bZ_{i}^{(k)} = \ve(\bI_{m_{ik}})\in \real^{m_{ik}^2}$ and $\bZ^{(k)} = \left(\bZ_1^{(k), \top},\ldots, \bZ_n^{(k), \top}\right)^{\top}\in\real^{N_{kk}}$.
Finally let $\bbeta_{k} = \left(\bfeta_k^{\top},\sigma_k^2\right)^{\top} \in \real^{\tilde{c}}$ with $\tilde{c} = c(c+1)/2 + 1$.
It follows that the penalized least squares in \eqref{eq:auto-cov} can be rewritten as
\begin{eqnarray*}%\label{eq:auto-covPLS}
	\left(\bChat^{(kk)} - \bX^{(k)} \bbeta_{k}\right)^{\top}\left(\bChat^{(kk)}  - \bX^{(k)} \bbeta_{k}\right) + \lambda_k \bbeta_{k}^{\top}\bQ\bbeta_{k},
\end{eqnarray*}
where $\bX^{(k)} = \left[\bB^{(kk)}, \bZ^{(k)}\right] \in \real^{N_{kk} \times \tilde{c}}$
and $\bQ  = \text{blockdiag}\left\{\bG_c^{\top}(\bI_c \otimes \bD\bD^{\top})\bG_c^{\top}, 0\right\}\in \real^{\tilde{c}\times \tilde{c}}$. 
Therefore, we obtain
$$\widehat{\bbeta}_k = \left(\widehat{\bfeta}_k^{\top}, \widehat{\sigma}_k^2\right)^{\top} = \left(\bX^{(k), \top}\bX^{(k)} + \lambda_k \bQ\right)^{-1}\bX^{(k), \top}\bChat^{(kk)}.$$
It follows that $\widehat{\btheta}_{kk} = \bG_c \widehat{\bfeta}_k$
and the estimate of the auto-covariance function $C_{kk}(s,t)$ is $\widehat{C}_{kk}(s,t) = \{\bb(t) \otimes \bb(s)\}^{\top} \widehat{\btheta}_{kk}$.

The above estimates of covariance functions may not lead to a positive semi-definite covariance operator and thus
have to be refined.
We pool all estimates together and we shall use the following proposition.

%To pool all individual covariances to obtain the associated eigenvalues/eigenfunctions, as well as to ensure the estimated covariance operator $\bGamma$ is positive semi-definite, we make use of 

\begin{prop} \label{eq:eigen}
	Assume that $C_{kk^{\prime}}(s,t) = \bb(s)^{\top} \bTheta_{k k^{\prime}} \bb(t)$.
	Let $\bG = \int \bb(t) \bb(t)^{\top} dt \in \real^{c\times c}$%= [g_{\gamma_1 \gamma_2}]_{1 \leq \gamma_1, \gamma_2 \leq c}$ with $g_{\gamma_1 \gamma_2} = \int B_{\gamma_1}(t) B_{\gamma_2}(t) dt$ 
	and
	assume that  $\bG$ is positive definite \citep{zhou1998local}. 
	Then $ \left[\bG^{\frac{1}{2}} \bTheta_{k k^{\prime}} \bG^{\frac{1}{2}}\right]_{1 \leq k, k^{\prime} \leq p} \in \real^{pc \times pc}$ admits the spectral decomposition, 
	$\sum_{\ell=1}^{\infty} d_{\ell} \bu_{\ell}\bu_{\ell}^{\top}$,
	where $d_{\ell}$ is the $\ell$th largest eigenvalue of the covariance operator $\bGamma$, and $\bu_{\ell} = \left\{\bu_{\ell}^{(1), \top}, \ldots, \bu_{\ell}^{(p),\top}\right\}^{\top}\in\real^{pc}$ is the associated eigenvector with $\bu_{\ell}^{(k)}\in \real^c$ and such that $\Psi_{\ell}^{(k)}(t) = \bb(t)^{\top}\bG^{-\frac{1}{2}}\bu_{\ell}^{(k)}$.
\end{prop}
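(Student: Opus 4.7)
The plan is to reduce the infinite-dimensional eigenvalue equation (1) for $\bGamma$ to a symmetric finite-dimensional eigenvalue problem by exploiting the tensor-product B-spline form of $\bC(s,t)$, and then to symmetrize via the change of variables $\bG^{1/2}$. The target formula $\Psi_\ell^{(k)}(t) = \bb(t)^\top \bG^{-1/2} \bu_\ell^{(k)}$ suggests working with the ansatz $\Psi_\ell^{(k)}(t) = \bb(t)^\top \bh_\ell^{(k)}$ for some $\bh_\ell^{(k)} \in \real^c$ and then identifying $\bu_\ell^{(k)} = \bG^{1/2} \bh_\ell^{(k)}$ at the end.

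First I would substitute the ansatz into (1). Using $C_{kk'}(s,t) = \bb(s)^\top \bTheta_{kk'} \bb(t)$ and $\bG = \int \bb(t)\bb(t)^\top dt$, the integral collapses, turning the eigenvalue equation into $\bb(s)^\top \sum_{k'=1}^p \bTheta_{kk'} \bG \bh_\ell^{(k')} = d_\ell\, \bb(s)^\top \bh_\ell^{(k)}$ for all $s \in \T$. Positive definiteness of $\bG$ implies linear independence of the B-splines, so the $\bb(s)^\top$ factor may be stripped, leaving the finite system $\sum_{k'} \bTheta_{kk'} \bG \bh_\ell^{(k')} = d_\ell \bh_\ell^{(k)}$ for $k = 1,\ldots,p$.

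Next I would symmetrize by setting $\bu_\ell^{(k)} = \bG^{1/2} \bh_\ell^{(k)}$ and multiplying both sides by $\bG^{1/2}$. The system becomes $\sum_{k'} \bG^{1/2} \bTheta_{kk'} \bG^{1/2} \bu_\ell^{(k')} = d_\ell \bu_\ell^{(k)}$, which, stacked over $k$, is exactly $\bM \bu_\ell = d_\ell \bu_\ell$ for the block matrix $\bM = [\bG^{1/2} \bTheta_{kk'} \bG^{1/2}]_{1 \leq k,k' \leq p}$. Symmetry of $\bM$ follows from $\bTheta_{kk'} = \bTheta_{k'k}^\top$, which is equivalent to $C_{kk'}(s,t) = C_{k'k}(t,s)$; hence $\bM$ admits a real spectral decomposition $\sum_\ell d_\ell \bu_\ell \bu_\ell^\top$ with orthonormal eigenvectors. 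To check that these $d_\ell$ and $\bu_\ell$ are precisely the eigenvalues and eigenvector coordinates for $\bGamma$, I would compute $<\bPsi_\ell,\bPsi_{\ell'}>_\H = \sum_k \bh_\ell^{(k),\top} \bG \bh_{\ell'}^{(k)} = \sum_k \bu_\ell^{(k),\top} \bu_{\ell'}^{(k)} = \bu_\ell^\top \bu_{\ell'}$, which shows that the $\real^{pc}$ orthonormality of $\bu_\ell$ translates exactly into orthonormality of $\bPsi_\ell$ in $\H$.

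The main obstacle I anticipate is the transition from the full spectrum of the infinite-dimensional operator $\bGamma$ to the spectrum of the finite-dimensional $\bM$. One must verify that no nonzero eigenvalue of $\bGamma$ is missed by the ansatz: this follows because the range of $\bGamma$ is finite dimensional with each component function lying in $\mathrm{span}\{B_1,\ldots,B_c\}$, by the tensor-product form of $\bC(s,t)$, so every eigenfunction associated with a nonzero $d_\ell$ is automatically of the assumed form, while the infinitely many eigenfunctions corresponding to the zero eigenvalue contribute nothing to the Mercer expansion (2). The rest is clean linear algebra made possible by the $\bG^{1/2}$ weighting, which simultaneously symmetrizes the block matrix and harmonizes the $\real^{pc}$ and $\H$ inner products.
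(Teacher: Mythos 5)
Your proposal is correct and follows essentially the same route as the paper: both proofs use the $\bG^{1/2}$ whitening of the B-spline basis to turn the integral eigenvalue equation into the finite-dimensional eigenproblem for the block matrix $\left[\bG^{1/2}\bTheta_{kk'}\bG^{1/2}\right]_{kk'}$, and both verify that $\H$-orthonormality of the $\bPsi_\ell$ is equivalent to Euclidean orthonormality of the stacked coefficient vectors $\bu_\ell$. The only cosmetic difference is in the final identification: you invoke the spectral theorem for the symmetric block matrix and argue completeness via the finite-dimensional range of $\bGamma$, whereas the paper reads the decomposition $\sum_\ell d_\ell \bu_\ell\bu_\ell^\top$ off the multivariate Mercer expansion directly; both are valid.
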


The proof is provided in Appendix B.
Proposition \ref{eq:eigen} implies that, with the tensor-product B-spline representation of the covariance functions,
one spectral decomposition  gives us the eigenvalues and eigenfunctions.
In particular, the eigenfunctions $\Psi_{\ell}^{(k)}(t)$ are linear combinations of the B-spline basis functions, which means that
they can be straightforwardly evaluated, an
advantage of spline-based methods compared to other smoothing methods for which eigenfunctions are approximated by
spectral decompositions of the covariance functions evaluated at a grid of time points.

Once we have $\widehat{\bTheta}_{k k^{\prime}}$, the estimate of the coefficient matrix $\bTheta_{kk'}$,
the spectral decomposition of $[\bG^{\frac{1}{2}} \widehat{\bTheta}_{k k^{\prime}} \bG^{\frac{1}{2}}]_{kk'}$
gives us  estimates $\widehat{d}_{\ell}$ and $\widehat{\bu}_{\ell} = \left\{\widehat{\bu}_{\ell}^{(1), \top}, \ldots, \widehat{\bu}_{\ell}^{(p),\top}\right\}^{\top}$.
We discard negative $\widehat{d}_{\ell}$ to ensure that the multivariate covariance operator is positive semi-definite and this leads
to a refined estimate of the coefficient matrix $\bTheta_{kk'}$, $\bThetatilde_{kk'} = \bG^{-\frac{1}{2}}\left\{\sum_{\ell: \widehat{d}_{\ell} > 0} \widehat{d}_{\ell} \widehat{\bu}_{\ell}^{(k)}\widehat{\bu}_{\ell}^{(k'),\top}\right\}\bG^{-\frac{1}{2}}$.
Then the refined estimate of the covariance functions is
$\widetilde{C}_{kk'}(s,t) = \bb(s)^{\top}\bThetatilde_{kk'}\bb(t)$.
Proposition \ref{eq:eigen} also suggests that the eigenfunctions can be estimated by
$\widetilde{\Psi}_{\ell}^{(k)}(t) = \bb(t)^{\top}\bG^{-\frac{1}{2}}\widehat{\bu}_{\ell}^{(k)}$.

For principal component analysis or curve prediction in practice, 
one may select further the number of principal components 
by either the proportion of variance explained (PVE) \citep{Greven:10} or an AIC-type criterion \citep{li2013selecting}. Here, we follow \cite{Greven:10} using PVE with a value of $0.99$.

%%%%%%%%%%%%%%%%%%%%%%%%%%%%%%%%%%%%%%%%%%%%%%%%%%%%%%%%%%%%%%
\subsubsection{Selection of Smoothing Parameters}
%%%%%%%%%%%%%%%%%%%%%%%%%%%%%%%%%%%%%%%%%%%%%%%%%%%%%%%%%%%%%%
We select the smoothing parameters in each auto-covariance/cross-covariance estimation
using leave-one-subject-out cross-validation; see, e.g., \cite{Yao:05a} and \cite{Xiao:16b}. 
A fast approximate algorithm for the auto-covariance has been derived in \cite{Xiao:16b}. 
So we focus on the cross-covariance and use the notation in \eqref{eq:cross-covPLS}.
Note that there are two smoothing parameters for each cross-covariance estimation.
%Such an approach is 
%An alternative is leave-one-observation-out cross-validation, which ignores the correlation within the subjects and hence is undesirable for our case. Since leave-one-subject-out cross-validation is a computationally intensive method, we derive a fast approximation algorithm for it.

%We introduce the algorithm using the notation of \eqref{eq:cross-covPLS} since \eqref{eq:auto-covPLS} is a special case of it. 
For simplicity, we suppress the superscript and subscript $kk^{\prime}$ in \eqref{eq:cross-covPLS} for both $\bChat$ and $\bB$.
Let $\bCtilde_i^{[i]}$ be the prediction of the auxiliary responses $\bChat_i$ from the estimate using data
without the $i$th subject. Let $\|\cdot\|$ be the Euclidean norm and  the cross-validation error is
\begin{eqnarray}\label{eq:icv}
	\textnormal{iCV} = \sum_{i=1}^n\left\|\bChat_i - \bCtilde_i^{[i]}\right\|^2.
\end{eqnarray}
%Direct calculation of \eqref{eq:icv} is computationally intensive; instead, we use an approximation.
We shall also now suppress  the subscript $k$ from $m_{ik}$ and $kk'$ from  $N_{kk'}$.
Let $\bS= \bB(\bB^{\top}\bB + \lambda_1 \bP_1 + \lambda_2 \bP_2)^{-1}\bB^{\top} \in \mathbb{R}^{N\times N}$,
%where $N = \sum_{i=1}^n m_i^2$,
$\bS_i = \bB_i(\bB^{\top}\bB + \lambda_1 \bP_1 + \lambda_2 \bP_2)^{-1}\bB^{\top} \in \mathbb{R}^{m_{i}^2\times N}$,
and
$\bS_{ii} = \bB_i(\bB^{\top}\bB + \lambda_1 \bP_1 + \lambda_2 \bP_2)^{-1}\bB_i^{\top}\in \mathbb{R}^{m_i^2\times m_i^2}$. 
%Note that $\bS$ is the smoother matrix
Then a short-cut formula for \eqref{eq:icv} is
$$
\textnormal{iCV} = \sum_{i=1}^n \left\|\left(\bI_{m_i^2} - \bS_{ii}\right)^{-1}(\bS_i\bChat -\bChat_i) \right\|^2.
$$
Similar to \cite{Xu:12} and \cite{Xiao:16b}, the iCV can be further simplified by adopting the approximation
$(\bI_{m_i^2} - \bS_{ii})^{-2} = \bI_{m_i^2} + 2\bS_{ii}$, which results in the generalized cross validation,
denoted by iGCV, 
\begin{equation}\label{eq:igcv_a}
	\iGCV %= \sum_{i=1}^n (\bS_i\bChat - \bChat_i)^{\top} (\bI_{n_i}+\bS_{ii} + \bS_{ii}^{\top})(\bS_i\bChat - \bChat_i)\\
	= \left\|\bChat-\bS\bChat\right\|^2 + 2\sum_{i=1}^n \left(\bS_i\bChat - \bChat_i\right)^{\top} \bS_{ii}\left(\bS_i\bChat - \bChat_i\right).
	%\end{split}
\end{equation}
While iGCV is much easier to compute than iCV, the formula in~\eqref{eq:igcv_a} is still computationally expensive to compute. Indeed, the smoother matrix $\bS$ is of dimension $2500\times 2500$ if $n=100$ and $m_i = m =5$ for all $i$. Thus, we need to further simplify the formula.

Let $\bG_n = \bB^{\top}\bB$,
$\bBtilde=\bB\bG_n^{-1/2}\in\mathbb{R}^{N\times c^2}$, $\bBtilde_i=\bB_i\bG_n^{-1/2}\in\mathbb{R}^{m_i^2\times c^2}$,
$\bbf=\bBtilde^{\top}\bChat\in\mathbb{R}^{c^2}$, $\bbf_i=\bBtilde_i^{\top}\bChat_i\in\mathbb{R}^{c^2}$,
and $\bL_i = \bBtilde_i^{\top}\bBtilde_i\in\mathbb{R}^{c^2\times c^2}$.
Also let $\bPtilde_1 =\bG_n^{-1/2}\bP_1 \bG_n^{-1/2}\in\mathbb{R}^{c^2\times c^2}$,
$\bPtilde_2 =\bG_n^{-1/2}\bP_2 \bG_n^{-1/2}\in\mathbb{R}^{c^2\times c^2}$,
and $\bSigma = \bI_{c^2} + \lambda_1\bPtilde_1 + \lambda_2\bPtilde_2$.
Then \eqref{eq:igcv_a} can be simplified as
\begin{equation}\label{eq:igcv_a1}
	\iGCV  =\left\|\bChat\right\|^2 - 2\bbf^{\top}\bSigma^{-1}\bbf + \bbf^{\top}\bSigma^{-2}\bbf
	+ 2\sum_{i=1}^n\left(\bL_i\bSigma^{-1}\bbf-\bbf_i\right)^{\top}\bSigma^{-1}\left(\bL_i\bSigma^{-1}\bbf-\bbf_i\right).
\end{equation}
Note that $\bSigma$ has two smoothing parameters. Following \cite{Wood:00}, we use an equivalent parameterization
$\bSigma = \bI + \rho \{w \bPtilde_1 + (1-w)\bPtilde_2\}$, where $\rho=\lambda_1 + \lambda_2$ represents the overall smoothing level
and $w = \lambda_1\rho^{-1}\in [0,1]$ is the relative weight of $\lambda_1$. 
We conduct a two-dimensional grid  search of $(\rho,w)$ as follows. For a given $w$, let $\bU\text{diag}(\bs)\bU^{\top}$ be the eigendecompsition of $w \bPtilde_1 + (1-w)\bPtilde_2$,
where $\bU\in\mathbb{R}^{c^2\times c^2}$ is an orthonormal matrix and $\bs = (s_1,\ldots, s_{c^2})\in\mathbb{R}^{c^2}$ is the vector of eigenvalues.
Then $\bSigma^{-1} = \bU\text{diag}(\bdtilde)\bU^{\top}$ with $\bdtilde =1/(1+\rho\bs)\in\mathbb{R}^{c^2}$.
%It follows that \eqref{eq:igcv_a1} can be further simplified. 
%The following result indicates that iGCV can be efficiently calculated for sparse data.
\begin{prop} \label{eq:igcv}
	Let $\odot$ stand for the point-wise multiplication. Then,
	\[
	\iGCV = \left\|\bChat\right\|^2 + (\bbftilde\odot\bdtilde)^{\top}(\bbftilde\odot\bdtilde) - 2\bdtilde^{\top}\bg   - 4\bdtilde^{\top} \bF \bdtilde + 2 \bdtilde^{\top}\left[\sum_{i=1}^n \left\{\bLtilde_i(\bbftilde\odot\bdtilde)\right\} \odot \left\{\bLtilde_i(\bbftilde\odot\bdtilde)\right\}\right],
	\]
	where $\bbftilde_i = \bU^{\top}\bbf_i\in\mathbb{R}^{c^2}$, $\bbftilde=\bU^{\top}\bbf\in\mathbb{R}^{c^2}$,
	$\bg = \bbftilde\odot\bbftilde - \sum_{i=1}^n \bbftilde_i\odot\bbftilde_i\in\mathbb{R}^{c^2}$, 
	$\bLtilde_i = \bU^{\top}\bL_i\bU\in\mathbb{R}^{c^2\times c^2}$,
	and $\bF = \sum_{i=1}^n(\bbftilde_i{\bbftilde}^{\top})\odot \bLtilde_i\in\mathbb{R}^{c^2\times c^2}$.
\end{prop}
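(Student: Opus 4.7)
The plan is a direct computation: substitute the spectral decomposition $\bSigma^{-1} = \bU\diag(\bdtilde)\bU^{\top}$ into each of the four terms of \eqref{eq:igcv_a1} and rewrite every quadratic form in terms of the rotated quantities $\bbftilde = \bU^{\top}\bbf$, $\bbftilde_i = \bU^{\top}\bbf_i$, and $\bLtilde_i = \bU^{\top}\bL_i\bU$. The identities I will rely on throughout are the elementary facts $\bv^{\top}\diag(\bdtilde)\bv = \bdtilde^{\top}(\bv \odot \bv)$ and $\bv^{\top}\diag(\bdtilde)\bw = \sum_j \tilde d_j v_j w_j$, together with the symmetry $\bLtilde_i = \bLtilde_i^{\top}$ that is inherited from $\bL_i = \bBtilde_i^{\top}\bBtilde_i$.

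First I would handle the easy pieces. The term $\|\bChat\|^2$ is unchanged. For $\bbf^{\top}\bSigma^{-2}\bbf$, inserting the eigendecomposition twice gives $\bbftilde^{\top}\diag(\bdtilde)^2\bbftilde = (\bbftilde\odot\bdtilde)^{\top}(\bbftilde\odot\bdtilde)$. For the two linear-in-$\bSigma^{-1}$ terms, $-2\bbf^{\top}\bSigma^{-1}\bbf$ becomes $-2\bdtilde^{\top}(\bbftilde\odot\bbftilde)$, and expanding the square in the summand of \eqref{eq:igcv_a1} produces, as one of its three pieces, $2\sum_i \bbf_i^{\top}\bSigma^{-1}\bbf_i = 2\bdtilde^{\top}\sum_i (\bbftilde_i\odot\bbftilde_i)$. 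Adding these two contributions collapses to $-2\bdtilde^{\top}\bg$ by the definition of $\bg$.

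The remaining work lies in the cubic and quartic pieces of the cross-validation summand. Writing $\bv_i = \bLtilde_i(\bbftilde \odot \bdtilde)$, one sees that $\bL_i \bSigma^{-1}\bbf = \bU \bv_i$, so
\[
\bbf^{\top}\bSigma^{-1}\bL_i\bSigma^{-1}\bL_i\bSigma^{-1}\bbf = (\bU\bv_i)^{\top}\bU\diag(\bdtilde)\bU^{\top}(\bU\bv_i) = \bdtilde^{\top}(\bv_i \odot \bv_i),
\]
which after multiplication by $2$ and summation over $i$ gives the final term in the target formula. For the cross term,
\[
\bbf_i^{\top}\bSigma^{-1}\bL_i\bSigma^{-1}\bbf = (\bbftilde_i \odot \bdtilde)^{\top}\bLtilde_i(\bbftilde \odot \bdtilde) = \sum_{j,k}\tilde d_j \tilde f_{ij}(\tilde L_i)_{jk}\tilde f_k \tilde d_k,
\]
and the definition $\bF = \sum_i (\bbftilde_i \bbftilde^{\top})\odot \bLtilde_i$ gives $\bdtilde^{\top}\bF\bdtilde = \sum_i \sum_{j,k}\tilde d_j \tilde f_{ij}(\tilde L_i)_{jk}\tilde f_k \tilde d_k$, matching exactly; a factor of $-4$ out front (i.e., $2 \times (-2)$) supplies the $-4\bdtilde^{\top}\bF\bdtilde$ contribution.

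The only spot that requires a moment of care is verifying that the bilinear identification $\sum_i \bbf_i^{\top}\bSigma^{-1}\bL_i\bSigma^{-1}\bbf = \bdtilde^{\top}\bF\bdtilde$ is symmetric in the roles of $\bbftilde_i$ and $\bbftilde$, since the original expression places $\bbftilde_i$ on the left and $\bbftilde$ on the right while $\bF$ is built from $\bbftilde_i \bbftilde^{\top}$; this is precisely where the symmetry $(\tilde L_i)_{jk} = (\tilde L_i)_{kj}$ is used to swap indices. Everything else is bookkeeping: gather the resulting five pieces, apply the definitions of $\bg$ and $\bF$, and read off the claimed expression for $\iGCV$.
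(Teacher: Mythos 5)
Your proof is correct and takes essentially the same route as the paper: substitute $\bSigma^{-1}=\bU\diag(\bdtilde)\bU^{\top}$ into \eqref{eq:igcv_a1}, rewrite each quadratic form in terms of $\bbftilde$, $\bbftilde_i$, and $\bLtilde_i$, and expand the leave-one-out term so that its pieces combine into $-2\bdtilde^{\top}\bg$, $-4\bdtilde^{\top}\bF\bdtilde$, and the final Hadamard-product term. The only minor quibble is that your identification $\sum_i\bbf_i^{\top}\bSigma^{-1}\bL_i\bSigma^{-1}\bbf=\bdtilde^{\top}\bF\bdtilde$ already holds entrywise by the definition of $\bF$, so the appeal to the symmetry of $\bLtilde_i$ is unnecessary (though harmless).
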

The proof is provided in Appendix B. 
For each $w$, note that only $\bdtilde$ depends on $\rho$
and needs to be calculated repeatedly,  and
all other terms need to be calculated only once.
The entire algorithm is presented in Algorithm \ref{algo:igcv}.
We give an evaluation of the complexity of the proposed algorithm.
Assume that $m_i = m$ for all $i$.
The first initialization (step 1) requires $O(nm^2 c^2 + n c^4 + c^6)$ computations.
For each $w$, the second initialization (step 3) also requires $O\{nc^4\min(m^2, c^2) + c^6\}$ computations.
For each $\rho$, steps 5-10 requires $O(nc^4)$ computations. %The details of the evaluation are given in the appendix. 
Therefore, the formula in Proposition \ref{eq:igcv} is most efficient to calculate for sparse data with small numbers of observations per subject, 
i.e., $m_i$s are small.
%See \cite{Xiao:16b} for a similar formula with detailed discussion. %The strategy is to define a two-dimensional grid and search for the minimal iGCV over the smoothing parameters.
%For each $\lambda_1$, we let $\bB^{\top}\bB + \lambda_1\bP_1 = \bA_1^{\frac{1}{2}}\bA_1^{\frac{1}{2}}$ for some square matrix $\bA_1^{\frac{1}{2}}$ and do a spectral decomposition for $\bA_1^{-\frac{1}{2}}\bP_2\bA_1^{-\frac{1}{2}}$. Then we can select $\lambda_2$ given $\lambda_1$ by using the same algorithm and Proposition~\ref{eq:igcv}.

\begin{algorithm}
	\SetKwInput{Input}{Input}
	\SetKwInput{Output}{Output}
	\Input{$\bB$, $\bChat$, $\bP_1$, $\bP_2$, $\brho = \{\rho_{1}, \ldots, \rho_{K}\}^{\top}$, $\bw = \{w_{1}, \ldots, w_{K}\}^{\top}$}
	\Output{$\{\rho^*, w^*\}$}
	Initialize $\|\bChat\|^2$, $\bPtilde_1$, $\bPtilde_2$, $\bG_n^{-1/2}$, $\bBtilde$, $\bBtilde_i$, $\bbf$,  $\bbf_i$, $\bL_i$ for $i=1,\ldots, n$;\\
	\ForEach{$w$ in $\bw$}{
		Initialize $\bs$, $\bbftilde$, $\bbftilde_i$, $\bg$, $\bF$, $\bLtilde_i$, $i = 1, \ldots, n$\;
		\ForEach{$\rho$ in $\brho$}{
			$\bdtilde \gets 1/(1+ \rho\bs)$\;
			$I \gets (\bbftilde\odot\bdtilde)^{\top}(\bbftilde\odot\bdtilde)$\;
			$II \gets - 2\bdtilde^{\top}\bg$\;
			$III \gets  - 4\bdtilde^{\top} \bF \bdtilde$\;
			$IV \gets 2 \bdtilde^{\top}\left[\sum_{i=1}^n \left\{\bLtilde_i(\bbftilde\odot\bdtilde)\right\} \odot \left\{\bLtilde_i(\bbftilde\odot\bdtilde)\right\}\right]$\;
			%		$V \gets 2 \bdtilde^T\left[\sum_{i=1}^n\left\{\bL_i(\bftilde\odot\bdtilde)\right\} \odot
			%		\left\{\bLtilde_i(\bftilde\odot\bdtilde)\right\} \right]$\;
			$\iGCV \gets \|\bChat\|^2 + I + II + III + IV$\;
		}
	}
	$\{\rho^*, w^*\} \gets \arg\min_{\rho, w}\iGCV$\;
	\caption{Selection of smoothing parameters}
	\label{algo:igcv}
\end{algorithm}

%%%%%%%%%%%%%%%%%%%%%%%%%%%%%%%%%%%%%%%%%%%%%%%
\subsection{Prediction} \label{sec:prediction}
%%%%%%%%%%%%%%%%%%%%%%%%%%%%%%%%%%%%%%%%%%%%%%%

For prediction, assume that the smooth curve $\bx_i(t)$ is generated from a multivariate Gaussian process. 
Suppose that we want to predict the $i$th multivariate response
$\bx_i(t)$ at $\{s_{i1},\ldots, s_{im}\}$ for $m\geq 1$.
Let $\by_i^{(k)} =  \left(y_{i1}^{(k)},\ldots,y_{im_{ik}}^{(k)}\right)^{\top}$
be the vector of observations at 
$\left\{t_{i1}^{(k)},\ldots, t_{im_{ik}}^{(k)}\right\}$ for the $k$th response.
Let $\bmu_i^{(k), o} = \left(\mu^{(k)}\left(t_{i1}^{(k)}\right),\ldots,\mu^{(k)}\left(t_{im_{ik}}^{(k)}\right)\right)^{\top}$ be the vector of the $k$th mean function at the observed time points. Let $\by_i = \left(
\by_i^{(1),\top},\ldots, \by_i^{(p),\top}\right)^{\top}$
and $\bmu_i^{o} = \left(
\bmu_i^{(1), o,\top},\ldots, \bmu_i^{(p), o,\top}
\right)^{\top}$.
Let $\bmu_i^n = \left(\mu^{(1)}(s_{i1}),\ldots,\mu^{(1)}(s_{im}), \cdots, \mu^{(p)}(s_{i1}),\ldots,\mu^{(p)}(s_{im})\right)^{\top}$
be the vector of mean functions at the time points for prediction.
%\begin{pmatrix} \bLambda_{11} & \cdots & \bLambda_{1p} \\ \vdots & \ddots & \vdots \\ \bLambda_{p1} & \cdots & \bLambda_{pp} \end{pmatrix}
%\begin{pmatrix} \bPhi_{i1}^o & \cdots & 0 \\ \vdots & \ddots & \vdots \\ 0 & \cdots & \bPhi_{ip}^o \end{pmatrix}^{\top} \\
%&=& \bPhi_i^o \bLambda \bPhi_i^{o,\top},
It follows that
\[
\left(\begin{array}{c}\by_i \\\bx_i\end{array}\right) \sim \mathcal{N}\left\{
\left(\begin{array}{c}\bmu_i^o \\ \bmu_i^n\end{array}\right),
\left(\begin{array}{cc}\C(\by_i)& \C(\bx_i, \by_i)^{\top}  \\\C(\bx_i, \by_i)
& \C(\bx_i)  \end{array}\right) 
\right\}.
\]
Thus, we obtain
\[
\E(\bx_i|\by_i) = \C(\bx_i, \by_i)\C(\by_i)^{-1} (\by_i - \bmu_i^o) + \bmu_i^n,
\]
\[
\C(\bx_i|\by_i) = \C(\bx_i) - \C(\bx_i, \by_i)\C(\by_i)^{-1}\C(\bx_i, \by_i)^{\top},
\]
%and note that
%\[
%\C(\bx_i, \by_i) = \bPsi_i^n \C(\bxi_i, \by_i).
%\]
Let $\bb_i^{(k), o} = \left[\bb\left(t_{i1}^{(k)}\right),\ldots, \bb\left(t_{im_{ik}}^{(k)}\right)\right]^{\top}$
and
$\bb_i^{n} = \left[\bb\left(s_{i1}\right),\ldots, \bb\left(s_{im}\right)\right]^{\top}$. 
Next let $\bB_i^{o} = \text{blockdiag}\left(\bb_i^{(1),o},\ldots,
\bb_i^{(p),o}\right)$ and $\bB_i^{n} = \bI_p\otimes \bb_i^{n}$.
Then $\C(\by_i)$ given by $\bB_i^o\bTheta \bB_i^{o, \top} +  \text{blockdiag}(\sigma_1^2\bI_{m_{i1}}, \ldots, \sigma_p^2\bI_{m_{ip}}) $, and $\C(\bx_i)$ and $\C(\by_i, \bx_i)$ are given  by $\bB_i^n\bTheta \bB_i^{n, \top}$ and $\bB_i^o\bTheta \bB_i^{n,\top}$, respectively.
Let $\bThetatilde = \left[\bThetatilde_{k k^{\prime}}\right]_{1 \leq k, k^{\prime} \leq p} \in \real^{pc \times pc}$.
Plugging in the estimates, 
we predict $\bx_i$ by
\begin{eqnarray*}
	\widehat{\bx}_i &=&  \left(\widehat{x}_{i}^{(1)}(s_{i1}),\ldots,\widehat{x}_{i}^{(1)}(s_{im}), \cdots, \widehat{x}_{i}^{(p)}(s_{i1}),\ldots,\widehat{x}_{i}^{(p)}(s_{im})\right)^{\top} \\
	%&=& \widehat{\bPsi}_i^n \widehat{\bxi}_i + \widehat{\bmu}_i^n \\
	&=& \left(\bB_i^n\widetilde{\bTheta}\bB_i^{o,\top}\right) \widehat{\bV}_i^{-1} (\by_i - \widehat{\bmu}_i^o) + \widehat{\bmu}_i^n,
\end{eqnarray*}
where 
$\widehat{\bmu}_i^o = \left(\widehat{\mu}^{(1)}\left(t_{i1}^{(1)}\right),\ldots,\widehat{\mu}^{(1)}
\left(t_{im_{i1}}^{(1)}\right), \cdots, \widehat{\mu}^{(p)}\left(t_{i1}^{(p)}\right),\ldots,\widehat{\mu}^{(p)}
\left(t_{im_{ip}}^{(p)}\right)\right)^{\top}$ is the estimate of $\bmu_i^o$,
$\widehat{\bmu}_i^n = \left(\widehat{\mu}^{(1)}(s_{i1}),\ldots,\widehat{\mu}^{(1)}(s_{im}), \cdots, \widehat{\mu}^{(p)}(s_{i1}),\ldots,\widehat{\mu}^{(p)}(s_{im})\right)^{\top}$ is the estimate of $\bmu_i^n$, $\widehat{\bV}_i = \bB_i^o\widetilde{\bTheta}\bB_i^{o,\top} + \text{blockdiag}\left(\widehat{\sigma}_1^2\bI_{m_{i1}}, \ldots, \widehat{\sigma}_p^2\bI_{m_{ip}}\right)$.
An approximate covariance matrix for $\widehat{\bx}_i$ is
$$
\widehat{\C}(\bx_i|\by_i) = \bB_i^n\widetilde{\bTheta}\bB_i^{n,\top} - \left(\bB_i^n\widetilde{\bTheta}\bB_i^{o,\top}\right) \widehat{\bV}_i^{-1} \left(\bB_i^n\widetilde{\bTheta}\bB_i^{o,\top}\right)^{\top}.
$$
Therefore, a $95\%$ point-wise confidence interval for the $k$th response is given by
\[
\widehat{x}_i^{(k)}(s_{ij}) \pm 1.96 \sqrt{\widehat{\V}\left(x_i^{(k)}(s_{ij})|\by_i\right)},
\]
where $\widehat{\V}\left(x_i^{(k)}(s_{ij})|\by_i\right)$ can be extracted from the diagonal of $\widehat{\C}(\bx_i|\by_i)$.

Finally, we predict the first $L\geq 1$ scores $\bxi_i = (\xi_{i1},\ldots, \xi_{iL})^{\top}$ for the $i$th subject.
Note that $\xi_{i\ell} = \int  \bPsi_{\ell}(t)^{\top}\{\bx_i(t) - \bmu(t)\}dt$.
With a similar derivation as above,
$\bx_i(t) - \bmu(t)$ can be predicted by
$\left\{\bI_p\otimes \bb(t)\right\}^{\top}\widetilde{\bTheta}\bB_i^{o,\top}\widehat{\bV}_i^{-1} (\by_i - \widehat{\bmu}_i^o)$.
By Proposition \ref{eq:eigen}, the eigenfunctions
$\Psi_{\ell}^{(k)}(t)$ are estimated by $\bb(t)^{\top}\bG^{-\frac{1}{2}}\widehat{\bu}_{\ell}^{(k)}$ and thus
$\bPsi_{\ell}(t)^{\top} = \widehat{\bu}_{\ell}^{\top}\{\bI_p\otimes \bG^{-\frac{1}{2}}\bb(t)\}$.
It follows that $$\widehat{\xi}_{i\ell} = \widehat{\bu}_{\ell}^{\top}\left(\bI_p \otimes \bG^{\frac{1}{2}}\right)\widetilde{\bTheta}\bB_i^{o,\top}\widehat{\bV}_i^{-1} (\by_i - \widehat{\bmu}_i^o).$$
\section{Simulations} \label{sec:simulation}
%%%%%%%%%%%%%%%%%%%%%%%%%%%%%%%%
We evaluate  the finite sample performance of the proposed method (denoted by mFACEs) against mFPCA via a synthetic simulation study
and a simulation study mimicking the ADNI data in the real data example. Here, we report the details and results of the former as the conclusions
remain the same for the latter and details are provided in the supplement.

\subsection{Simulation Settings and Evaluation Criteria}
We generate data by model~\eqref{eq:model}
with $p=3$ responses.
The mean functions are $\bmu(t) = [5 \sin (2 \pi t), 5 \cos (2 \pi t), 5(t-1)^2]^{\top}$.
We first specify the auto-covariance functions. Let $\bPhi_1(t) = \left[\sqrt{2} \sin (2 \pi t), \sqrt{2} \cos (4 \pi t), \sqrt{2} \sin (4\pi t)\right]^{\top}$, 
$\bPhi_2(t) = \left[\sqrt{2} \cos (\pi t), \sqrt{2} \cos (2\pi t), \sqrt{2} \cos (3 \pi t)\right]^{\top}$, and $\bPhi_3(t) = \left[\sqrt{2} \sin (\pi t), \sqrt{2} \sin (2\pi t), \sqrt{2} \sin (3 \pi t)\right]^{\top}$. Also let
\begin{eqnarray*}
	\bLambda_{11} = \begin{pmatrix} 3 & 0 & 0 \\ 0 & 1.5 & 0 \\ 0 & 0 & 0.75 \end{pmatrix}, \,\, \bLambda_{22} = \begin{pmatrix} 3.5 & 0 & 0 \\ 0 & 1.75 & 0 \\ 0 & 0 & 0.5 \end{pmatrix}, \,\, \bLambda_{33} = \begin{pmatrix} 2.5 & 0 & 0 \\ 0 & 2 & 0 \\ 0 & 0 & 1 \end{pmatrix}.
\end{eqnarray*}
Then the auto-covariance functions are $C_{kk}(s, t) = \bPhi_k(s)^{\top}\bLambda_{kk}\bPhi_k(t), k=1, 2, 3$.
For the cross-covariance functions,  let
$C_{kk'}(s,t) = \rho\bPhi_k(s)^{\top}\bLambda_{kk}^{\frac {1}{2}}\bLambda_{k'k'}^{\frac{1}{2}}\bPhi_{k'}(t)$ for $k\neq k'$,
where $\rho\in [0,1]$ is a parameter to be specified. 
The induced covariance operator from the above specifications is proper; see Lemma \ref{lem1} in Appendix C.
It is easy to derive that the absolute value of cross-correlation $\rho_{kk'}(s,t) = C_{kk'}(s,t)/\sqrt{C_{kk}(s,s) C_{k'k'}(t,t)}$ is bounded by $\rho$. 
Hence, $\rho$ controls the overall level of correlation between responses: if $\rho = 0$, then the responses are uncorrelated from each other.
%where $\rho$ has two levels: $\rho = 0.5$ for weak correlation, and $\rho = 0.9$ for moderate correlation. 
The eigendecomposition of the multivariate covariance function gives 9 non-zero eigenvalues with associated multivariate eigenfunctions, %Note also that there are 9 non-zero eigenvalues,
hence, for $\ell=1,\ldots, 9$,
we  simulate  the scores $\xi_{i \ell}$ from $\mathcal{N}(0, d_{\ell})$,
where $d_{\ell}$ are the induced eigenvalues. 
Next, we simulate the white noises $\epsilon^{(k)}_{ij}$ from $\mathcal{N}(0,\sigma_{\epsilon}^2)$,
where $\sigma_{\epsilon}^2$ is determined according to the signal-to-noise ratio $\text{SNR} = \sum_{\ell} d_{\ell} / (p \sigma_{\epsilon}^2)$. Here, we let $\text{SNR} = 2$.
For each response, the sampling time points are drawn from a uniform distribution in the unit interval and
the number of observations for each subject, $m_{ik}$,  is generated from a uniform discrete distribution on $\{3,4,5,6,7\}$.
Thus, the sampling points not only vary from subject to subject but also vary across responses within each subject.

We use a factorial design with  two factors: the number of subjects $n$ and the correlation parameter $\rho$.
We let $n=100, 200$ or $400$. We let $\rho = 0.5$, which corresponds to a weak correlation between responses as
the average absolute correlations between responses is only $0.36$. Another value of $\rho$ is 0.9, which
corresponds to a moderate correlation between responses as the average absolute correlations between responses is about $0.50$. 
%Note that because of the structure of $\bR$,
%$\rho$ higher than the maximum correlation between responses and on average the correlations between responses are much smaller than $\rho$.
In total, we have 6 model conditions and for each model condition we generate $200$ datasets.
To evaluate the prediction accuracy of the various methods, we draw $200$ additional subjects as testing data.
The true correlation functions and a sample of the simulated data are shown in the supplement.

We compare mFACEs and mFPCA in terms of estimation accuracy of the covariance functions, the eigenfunctions and eigenvalues,
and prediction of new subjects.
For covariance function estimation, we use the relative integrated square errors (RISE). Let $\widehat{C}_{kk^{\prime}}(s,t)$ be an estimate of $C_{kk^{\prime}}(s,t)$, then RISE are given by
$$
\frac{\sum_{k=1}^p\sum_{k^{\prime}=1}^p\int_0^1 \int_0^1 \left\{C_{kk^{\prime}}(s,t) - \widehat{C}_{kk^{\prime}}(s,t)\right\}^2 ds dt}{\sum_{k=1}^p\sum_{k^{\prime}=1}^p\int_0^1 \int_0^1 \left\{C_{kk^{\prime}}(s,t)\right\}^2 ds dt}.
$$
For estimating the $\ell$th eigenfunction, we use the integrated square errors (ISE), which are defined as
$$
\min \left[\sum_{k=1}^{p}\int_0^1\left\{\Psi_{\ell}^{(k)}(t) - \widehat{\Psi}_{\ell}^{(k)}(t)\right\}^2dt, \,\, \sum_{k=1}^{p}\int_0^1\left\{\Psi_{\ell}^{(k)}(t) + \widehat{\Psi}_{\ell}^{(k)}(t)\right\}^2dt\right].
$$ 
Note that the range of ISE is $[0,2]$.
For estimating the eigenvalues, we use the ratio of the estimate against the truth, i.e., $\hat{d}_{\ell} / d_{\ell}$.
%relative squared errors (RSE).
%Let $\widehat{d}_{\ell}$ be an estimate of $d_{\ell}$, RSE are defined as
%$(\hat{d}_{\ell} / d_{\ell}-1)^2$.
For predicting new curves, we use the mean integrated square errors (MISE), which are given by
$$
\frac{1}{200 p}\sum_{k=1}^p\sum_{i=1}^{200}\left[\int_0^1\left\{x_i^{(k)}(t) - \widehat{x}_i^{(k)}(t) \right\}^2 dt\right].
$$

For the curve prediction using mFPCA, we truncate the number of principal components using a PVE of 0.99.
It is worth noting that if no truncation is adopted, then the curve prediction using mFPCA reduces to curve prediction
using univariate FPCA.
We shall also consider the conditional expectation method based on the estimates
of covariance functions from mFPCA. 
The method is denoted by mFPCA(CE) and its difference with mFACEs is that
different estimates of covariance functions are used.
%
%\begin{figure}[htp]
%	\centering
%	\scalebox{0.3}{
%		\includegraphics[]{cor_sim2}
%	}
%	\caption{\label{fig:cor_sim2} Scenario 2: Correlation functions.}
%\end{figure}
%%%%%%%%%%%%%%%%%%%%%%%%%%%%%
\subsection{Simulation Results}
%%%%%%%%%%%%%%%%%%%%%%%%%%%%%
Figure~\ref{fig:cov_sim2} 
gives boxplots of RISEs of mFACEs and mFPCA for estimating covariance functions.
Under all model conditions, mFACEs outperforms mFPCA and the improvement in RISEs as the sample
size increases is much more pronounced for mFACEs. 
Under the model conditions with moderate correlations ($\rho = 0.9$), the advantage of mFACEs is substantial 
even for the small sample size $n = 100$.

\begin{figure}[htp]
	\centering
	\scalebox{0.35}{
		\includegraphics[]{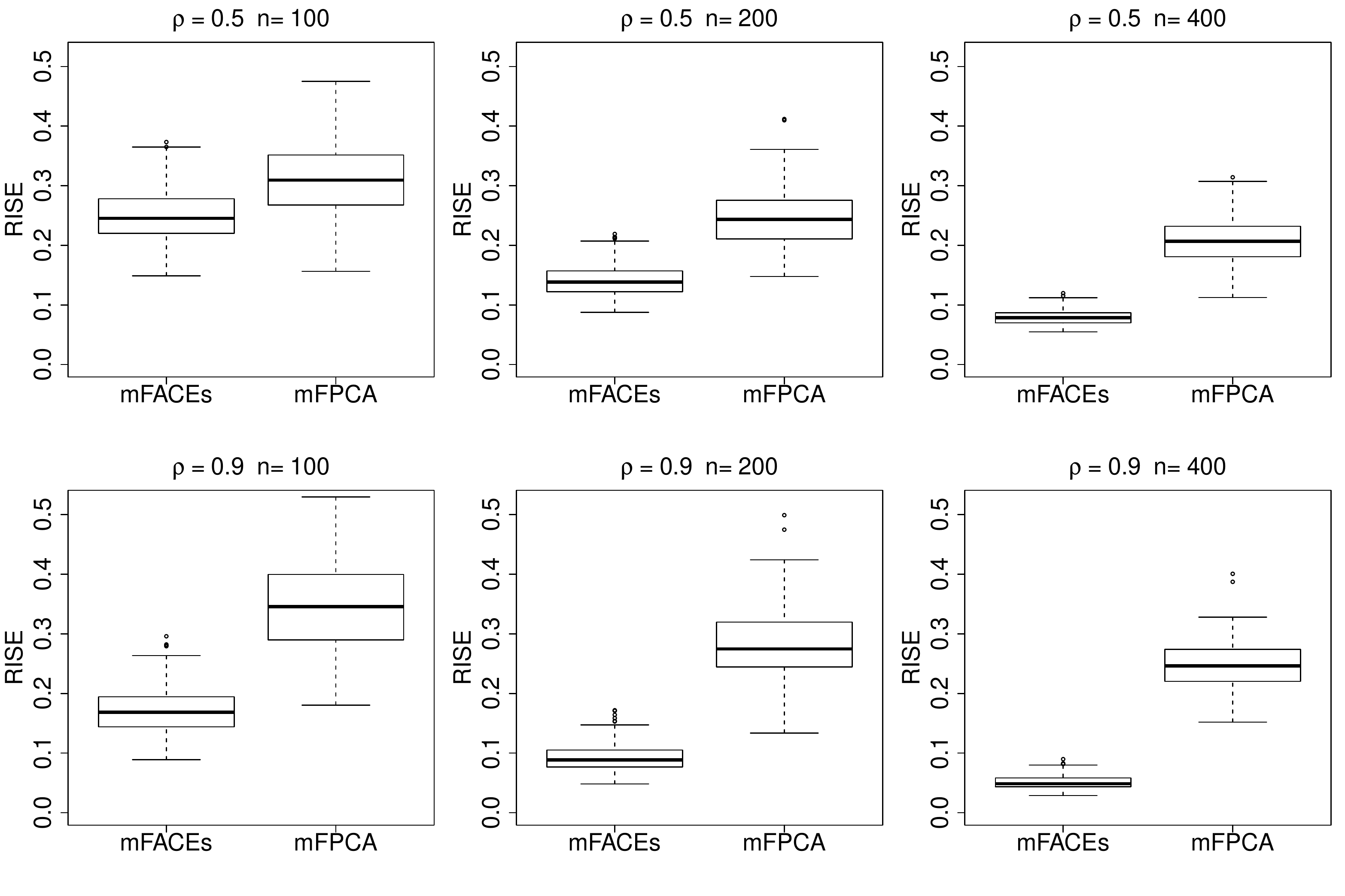}
	}
	\caption{\label{fig:cov_sim2}Boxplots of RISEs of mFACEs and mFPCA for estimating the covariance function.}
\end{figure}

Figures~\ref{fig:phi_sim2} and~\ref{fig:lam_sim2} give boxplots of ISEs and violin plots of mFACEs and mFPCA for estimating the top two eigenfunctions and eigenvalues,
respectively. The top two eigenvalues account for about $60\%$ of the total variation in the functional data
for $\rho = 0.5$ and it is $80\%$ for $\rho =0.9$.
Figure~\ref{fig:phi_sim2} shows that while the two methods are overall comparable for estimating the 1st eigenfunction, mFACEs 
has a much better accuracy for estimating the second eigenfunction  than mFPCA. 
The violin plots in Figure~\ref{fig:lam_sim2} show that mFACEs outperforms mFPCA substantially for estimating both eigenvalues under all model conditions. 
The mFPCA always underestimates the eigenvalues as the variation of scores from univariate FPCA is smaller than the true variation and hence leads to underestimates of eigenvalues.

\begin{figure}[htp]
	\centering
	\scalebox{0.35}{
		\includegraphics[]{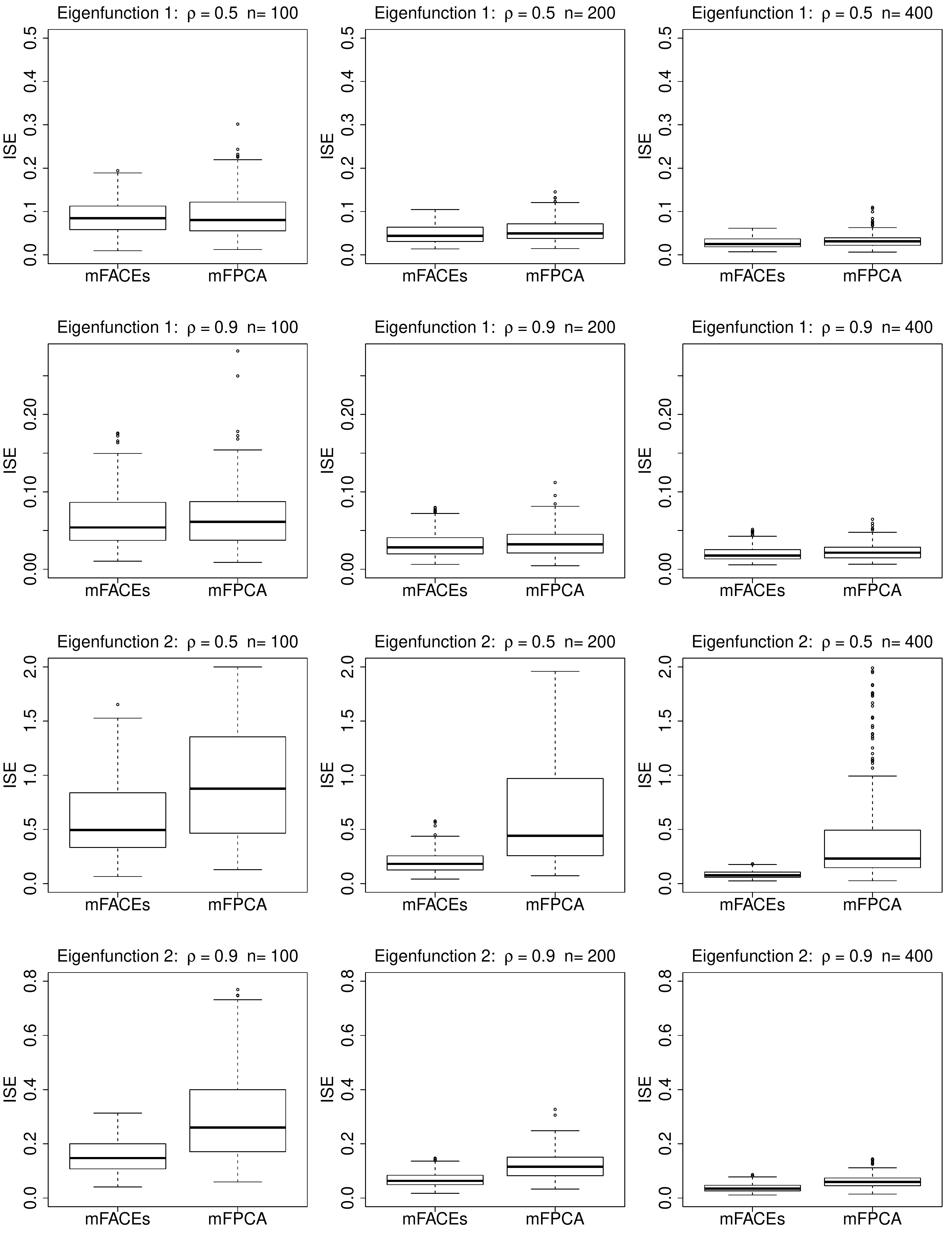}
	}
	\caption{\label{fig:phi_sim2} Boxplots of ISEs of mFACEs and mFPCA for estimating the top two eigenfunctions.}
\end{figure}

\begin{figure}[htp]
	\centering
	\scalebox{0.4}{
		\includegraphics[]{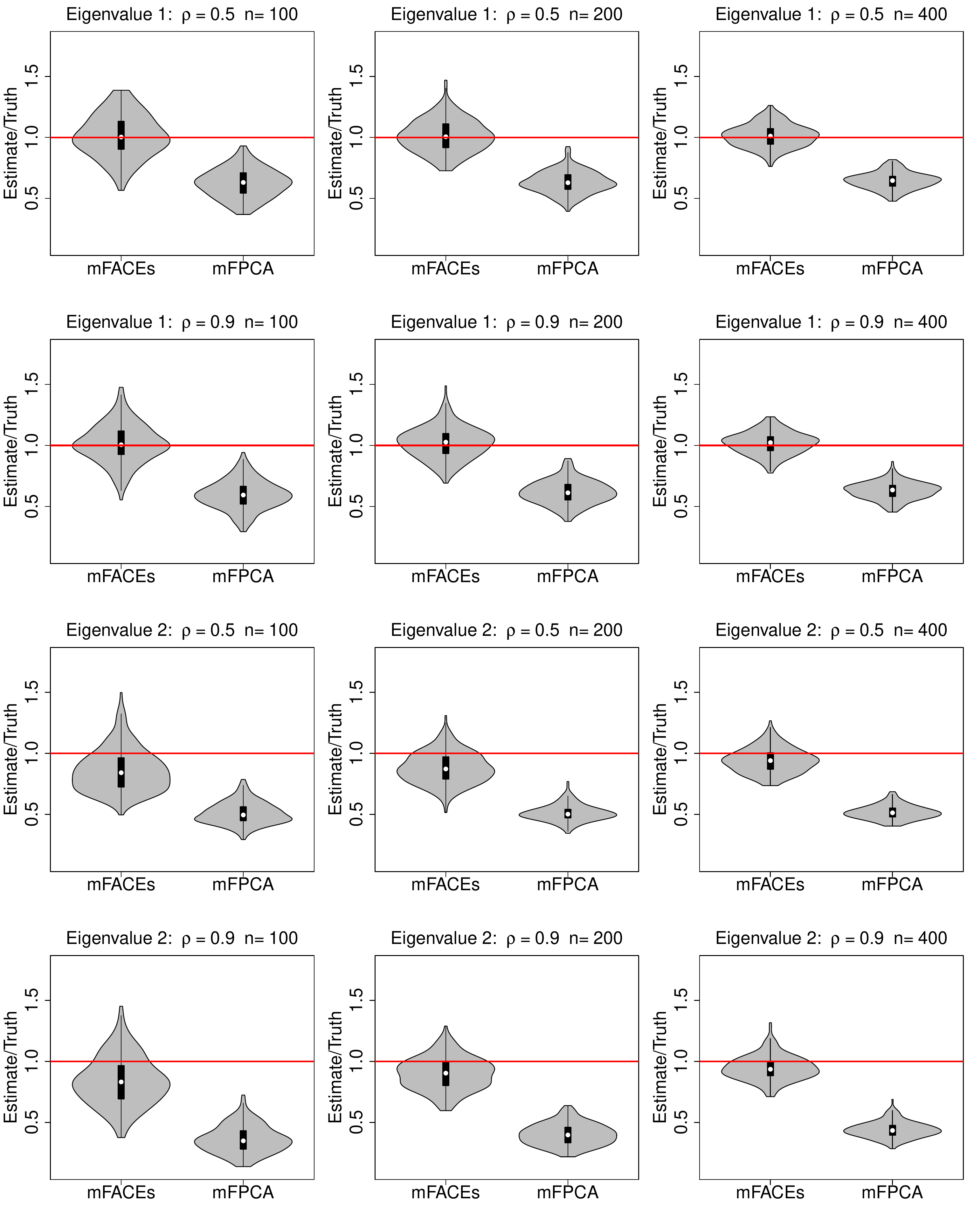}
	}
	\caption{\label{fig:lam_sim2} Violin plots of mFACEs and mFPCA for estimating the top two eigenvalues. The red horizontal lines indicate that the estimates are equal to the truth.}
\end{figure}

Finally, we consider the prediction of new subjects by mFACEs, mFPCA and mFPCA(CE).
%See Section for the details of mFPCA and mFPCA (CE) and their difference.
We define the relative efficiencies
of different methods as
the ratios of MISEs  with respect to that of univariate FPCA; see  Figure~\ref{fig:pred_sim2}. %and it can be expected that the ratio of mFPCA is close to $1$.
Univariate FPCA is implemented in the \texttt{R} package \texttt{face} \citep{Xiao:16c}.
We have the following findings.
Under all model conditions, mFACEs has the smallest MISE, mFPCA(CE) has the second best performance, and mFPCA is close to univariate FPCA. 
Thus, on average mFACEs provides the most accurate curve prediction.
These results indicate that: 1) mFACEs has better covariance estimation than mFPCA(CE), and so is the prediction based on it; 
2) compared to mFPCA/univariate FPCA, mFPCA(CE) exploits the correlation information and hence results in better predictions.
%When $\rho=0.5$ and $n=100$, the performance of different methods are comparable to univariate FPCA, although mFACEs is slightly better. 

\begin{figure}[htp]
	\centering
	\scalebox{0.4}{
		\includegraphics[]{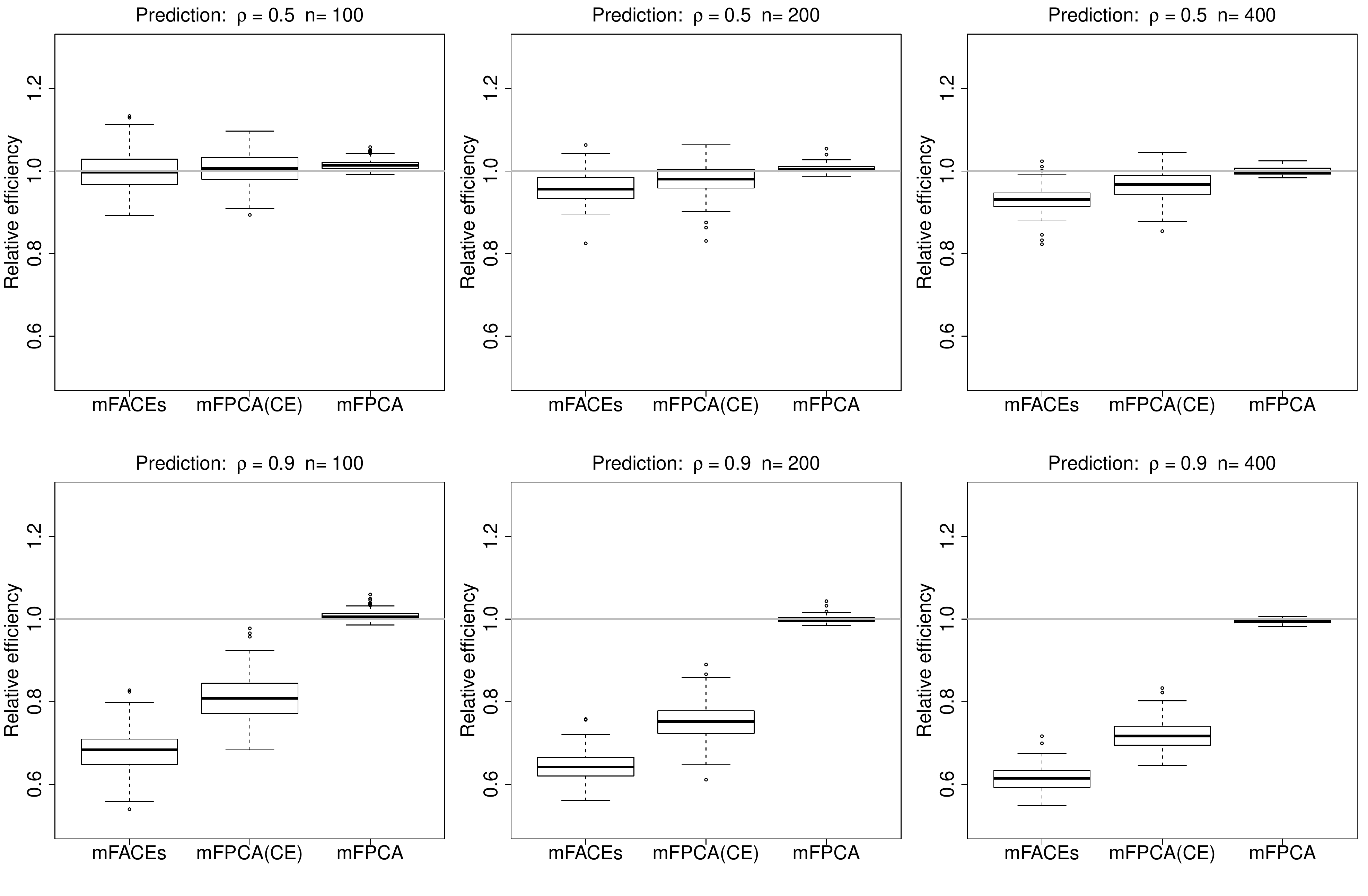}
	}
	\caption{\label{fig:pred_sim2}  Boxplots of relative efficiency of three methods for curve prediction. The gray horizontal lines indicate the MISEs for univariate FPCA.}
\end{figure}

In summary, mFACEs shows competing performance against alternative methods.
%Our method is particularly efficient for sparse data with a few observations since in that case mFPCA underestimates the covariance matrix.
%When a large number of observations per subject is provided, we find that mFPCA is comparable to mFACEs (results not shown).
%Consequently, our method warrants better prediction through taking into account the correlations between responses. 
%
%
%\begin{figure}[H]
%	\centering
%	\scalebox{0.5}{
%		\includegraphics[]{cov_sim1}
%	}
%	\caption{\label{fig:cov_sim1} Scenario 1: Boxplots of RISEs of two estimators for estimating the covariance function.}
%\end{figure}
%
%
%\begin{figure}[H]
%	\centering
%	\scalebox{0.35}{
%		\includegraphics[]{psi_sim1}
%	}
%	\caption{\label{fig:phi_sim1} Scenario 1: Boxplots of ISEs of two estimators for estimating the top two eigenfunctions.}
%\end{figure}
%
%\begin{figure}[H]
%	\centering
%	\scalebox{0.4}{
%		\includegraphics[]{lam_sim1}
%	}
%	\caption{\label{fig:lam_sim1} Scenario 1: Violin plots of two estimators for estimating the top two eigenvalues.}
%\end{figure}
%
%\begin{sidewaysfigure}[htp]
%	\centering
%	\scalebox{0.35}{
%		\includegraphics[]{pred_sim1}
%	}
%	\caption{\label{fig:pred_sim1} Scenario 1: Boxplots of relative efficiency of three methods for curve prediction.}
%\end{sidewaysfigure}

%%%%%%%%%%%%%%%%%%%%%%%%%%%%%%%%%%%%%%%%%%%%%%%%%%%%%%%%%%%%%%
\section{Application to Alzheimer's Disease Study} \label{sec:application}
%%%%%%%%%%%%%%%%%%%%%%%%%%%%%%%%%%%%%%%%%%%%%%%%%%%%%%%%%%%%%%
The Alzheimer's Disease Neuroimaging Initiative (ADNI) is a two-stage longitudinal observational study launched in year 2003 with the primary goal of investigating
whether serial neuroimags, biological markers, clinical and neuropsychological assessments can be combined to measure the progression of Alzheimer's disease (AD) \citep{weiner2017recent}.
The ADNI-1 data from the first stage contain $379$ patients with amnestic mild
cognitive impairment (MCI, a risk state for AD) at baseline who had at least one follow-up visit. 
Participants were assessed at baseline, $6$, $12$, $18$, $24$, and $36$ months with additional annual follow-ups included in the second stage of the study.
At each visit, various neuropsychological assessments, clinical measures, and brain images were collected.
The ADNI-2 data include  $424$ additional patients suffering from MCI and significant memory concern, with at least one
follow-up visit and longitudinal data collected over four years.
Thus, for the combined data, 
the total number of subjects is $803$, and the average number of visits is $4.72$.
The data are publicly available at \url{http://ida.loni.ucla.edu/}. 

We consider five longitudinal markers commonly measured in studies of AD with strong comparative predictive
value \citep{li2017prediction}.
Among the five markers,
Disease Assessment Scale-Cognitive 13 items (ADAS-Cog 13), Rey Auditory Verbal Learning Test immediate recall (RAVLT.imme), Rey Auditory Verbal Learning Test learning curve (RAVLT.learn), and Mini-Mental State Examination (MMSE) are neuropsychological assessments.
Functional Assessment Questionnaire (FAQ) is a functional and behavioral assessment.
High values of ADAS-Cog 13 and FAQ indicate a high-risk state for AD, whereas low values of RAVLT.imme, RAVLT.learn and MMSE reflect severe cognitive impairment. 
The longitudinal trajectories in ADNI-1 and ADNI-2 are defined on the same time domain with the largest follow-up time $96$ months from the start of ADNI-1 (time $0$).
% see Fig~\ref{fig:ADNI_obs} for an example of five randomly selected subjects.

%\begin{figure}[htp]
%	\centering
%	\scalebox{0.45}{
%		\includegraphics[]{ADNI_obs}
%	}
%	\caption{\label{fig:ADNI_obs} The first five panels: Spaghetti plots of the longitudinal markers. Five randomly picked subjects are highlighted by colors.  The last panel: The distribution of the number of visits. }
%\end{figure}

%%%%%%%%%%%%%%%%%%%%%%%%%%%%%%%%%%%%%%%%%%%%%%%%%%%%%%%%%%%%%%%%%%%%
\subsection{Multivariate FPCA via mFACEs}\label{sec:MFPCA}
%%%%%%%%%%%%%%%%%%%%%%%%%%%%%%%%%%%%%%%%%%%%%%%%%%%%%%%%%%%%%%%%%%%%
%We conduct multivariate functional principal component analysis for the ADNI data. 
%It takes around $5$ minutes for MFPCA using our method on a standard laptop with Intel Core i7 $2.7$ GHz CPU and $16$ GB of RAM.
We analyze the five longitudinal biomarkers using  mFACEs. % and obtain the estimated covariance functions.
%The estimated mean and standard deviation functions are shown in Fig~\ref{fig:ADNI_summary}. 
For better visualization, we plot in Figure~\ref{fig:ADNI_cor}
the estimated correlation functions $\rho_{kk'}(s,t) = C_{kk'}(s,t)/\sqrt{C_{kk}(s,s) C_{k'k'}(t,t)}$.
The plot indicates of two groups of biomarkers: ADAS-Cog 13 and FAQ in one group whereas
RAVLT.imme, RAVLT.learn and MMSE in another group. 
The biomarkers within the groups are positively correlated and negatively correlated between groups,
which make sense as high values of ADAS-Cog 13 and FAQ and low values for the other biomarkers
suggest of AD.
Next, we display in Figure~\ref{fig:ADNI_eig} the two estimated (multivariate) eigenfunctions associated with
the top two estimated eigenvalues, which
account for $69\%$ and $11\%$ of the total variance in the functional part of the data, respectively.
The eigenfunctions reveal how the 5 biomarkers co-variate and how a subject's trajectories of biomarkers
deviate from the population mean. 
Indeed, we see from Figure~\ref{fig:ADNI_eig} that the first eigenfunction (solid curves) is below the zero-line for 
ADAS-Cog 13 and FAQ and above the zero-line for the other three biomarkers.
This means that the score corresponding to the first eigenfunction might be used as an indicator of AD.
Indeed,  a negative score for the first eigenfunction 
means  higher-than-population-mean values of the former while
lower-than-population-mean values of the latter, indicating more severe AD status.
The second eigenfunction (dashed curves) for the five biomarkers is below the zero line at first and then above it or the other way around,
potentially suggesting of a longitudinal pattern of the AD progression. Specifically, these subjects with a positive score for the second
eigenfunction will have higher ADAS-Cog 13/FAQ and lower RAVLT and MMSE over the months, suggesting of AD progression.
Finally, we illustrate  in Figure~\ref{fig:ADNI_curves} the predicted curves along with the associated $95\%$ point-wise confidence bands
for three subjects.
We focus on predicting the trajectories over the first four years as there are more observations.
We can see that the confidence bands are getting wider at the later time points because of fewer observations.

\begin{figure}[htp]
	\centering
	\scalebox{0.6}{
		\includegraphics[]{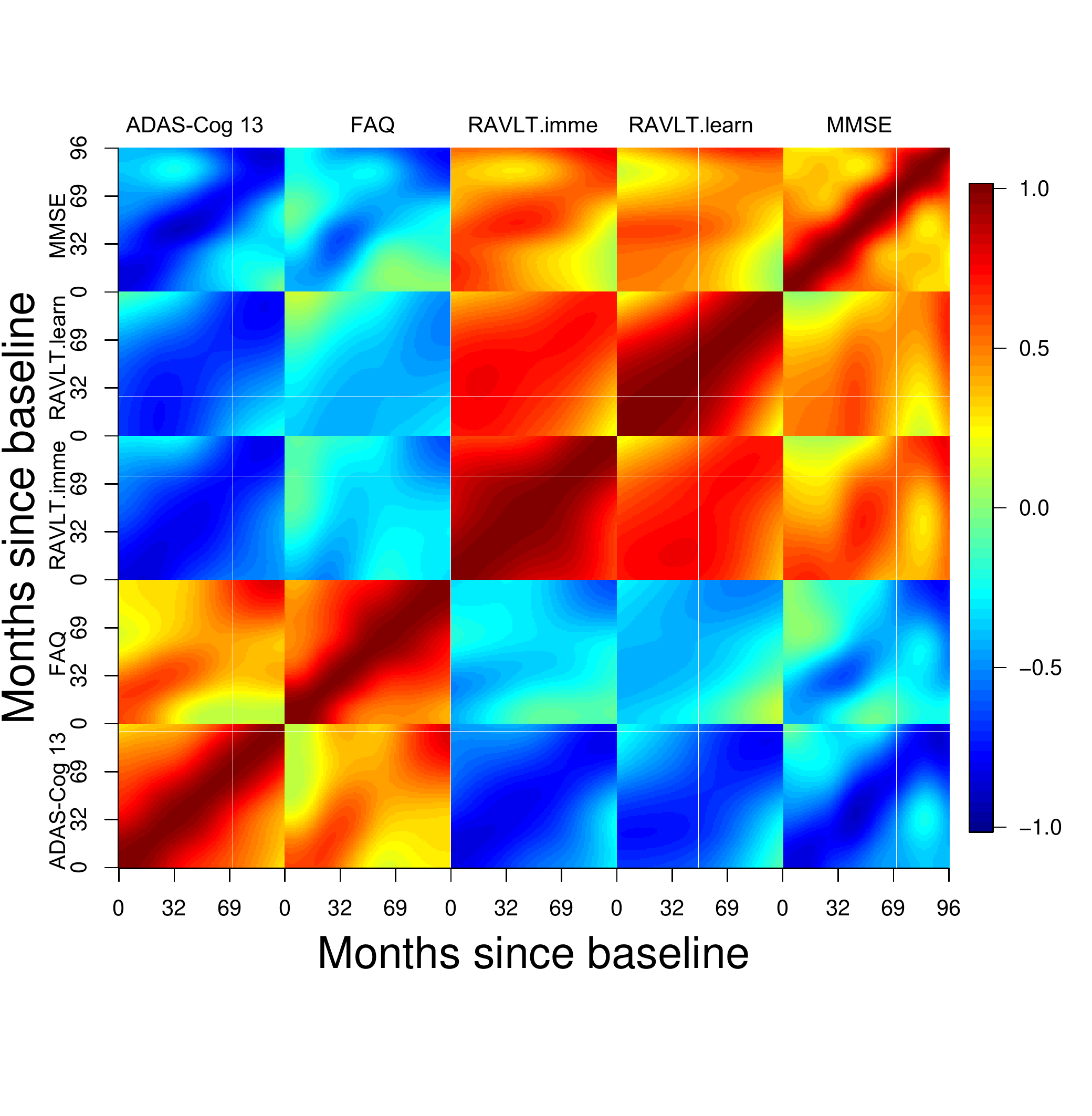}
	}
	\caption{\label{fig:ADNI_cor} Estimated correlation functions for the longitudinal markers.}
\end{figure}

\begin{figure}[htp]
	\centering
	\scalebox{0.3}{
		\includegraphics[]{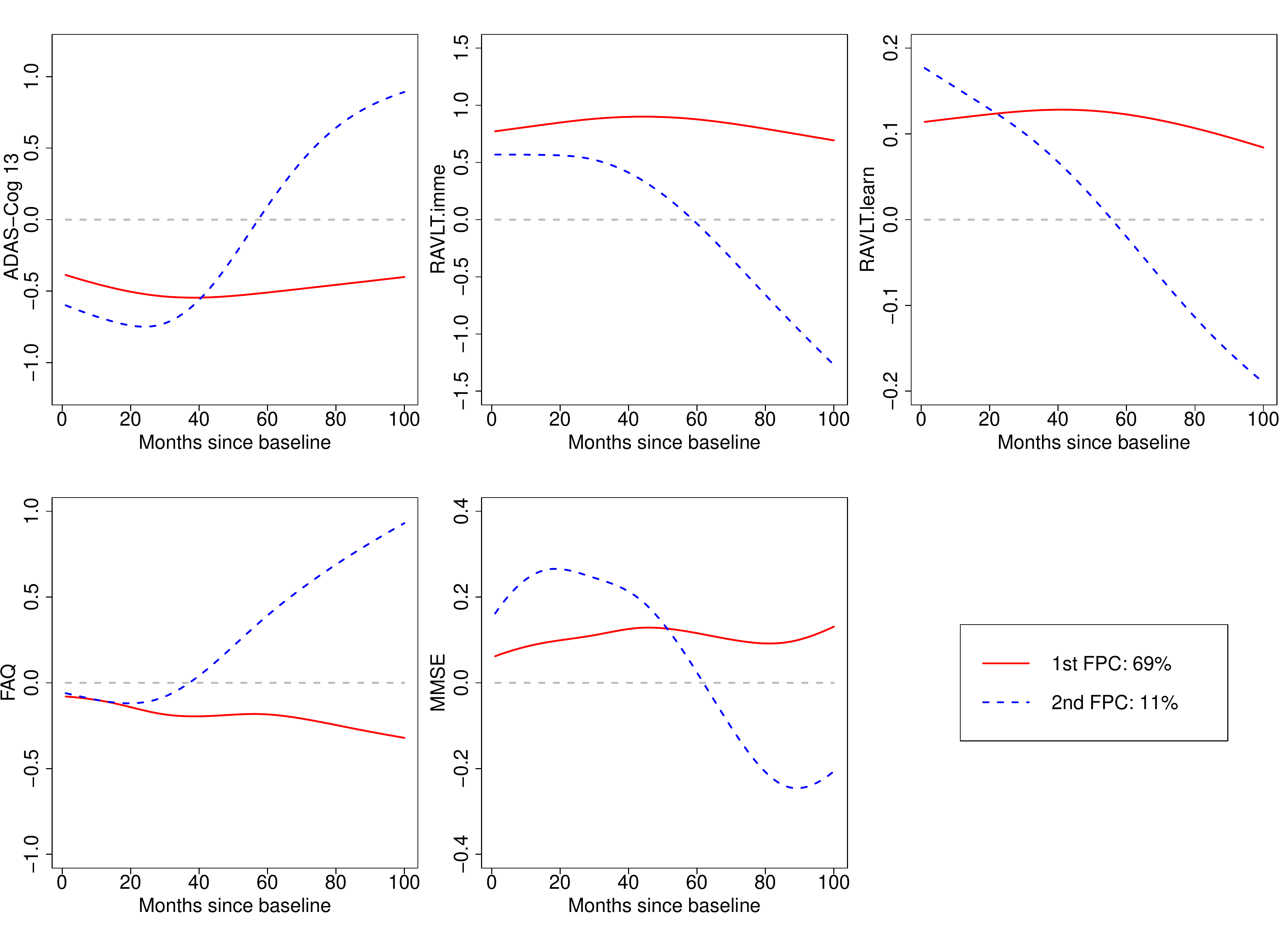}
	}
	\caption{\label{fig:ADNI_eig} Estimated top two eigenfunctions for the longitudinal markers.} 
\end{figure}

\begin{figure}[htp]
	\centering
	\scalebox{0.35}{
		\includegraphics[]{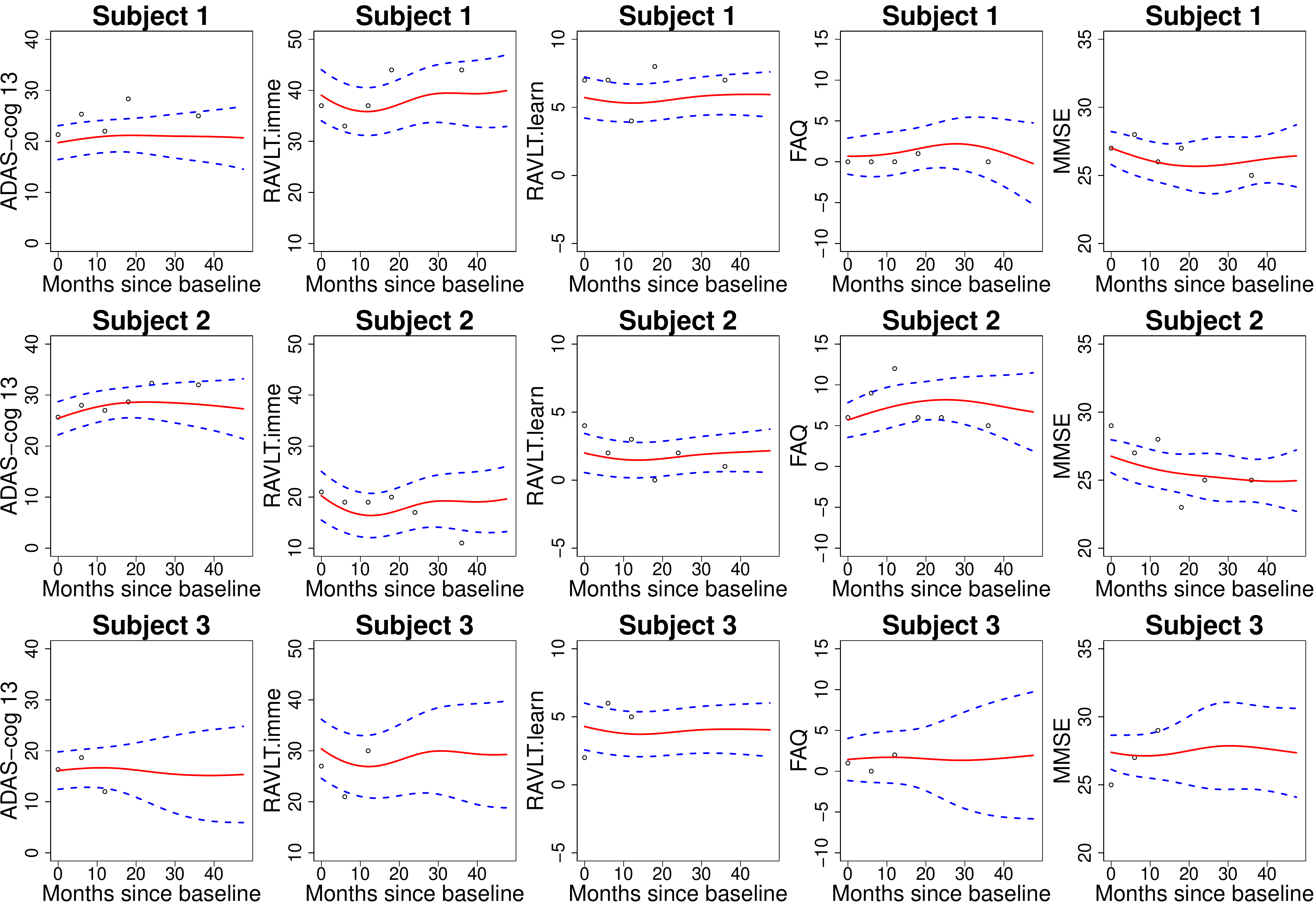}
	}
	\caption{\label{fig:ADNI_curves} Predicted subject-specific curves (red solid line) of the longitudinal markers and associated $95\%$ point-wise confidence bands (blue dashed line) for three subjects.}
\end{figure}

%%%%%%%%%%%%%%%%%%%%%%%%%%%%%%%%%%%%%%%%%%%%%%%%%%%
\subsection{Comparison of Prediction Performance of Different Methods}\label{sec:com_prediction}
%%%%%%%%%%%%%%%%%%%%%%%%%%%%%%%%%%%%%%%%%%%%%%%%%%%
We compare the proposed mFACEs with mFPCA and mFPCA(CE) for predicting the five longitudinal biomarkers. 
The prediction performance is evaluated by the average squared prediction errors (APE),
\begin{eqnarray*}
	\textnormal{APE}_k = \frac{1}{n}\sum_{i=1}^n\left[\frac{1}{m_{ik}}\sum_{j=1}^{m_{ik}} \left\{y_{ij}^{(k)} - \widehat{y}_{ij}^{(k)}\right\}^2  \right],
\end{eqnarray*}
where $\widehat{y}_{ij}^{(k)}$ is the predicted value of the $k$th biomarker for the $i$th subject at time $t_{ij}^{(k)}$.
We conduct two types of validation: an internal validation and an external validation.
For the internal validation, we perform a $10$-fold cross-validation to the combined data of ADNI-1 and ADNI-2.
For the external validation, we  fit the model using only the ADNI-1 data and then predict ADNI-2 data. 
Figure~\ref{fig:ADNI_valid} summarizes the results.
For simplicity, we  present the relative efficiency of APE,  which is the ratio of APEs of one method against the mFPCA. 
In both cases, mFACEs achieves better prediction accuracy than competing methods.  Note that mFPCA(CE) outperforms mFPCA  for predicting almost all biomarkers.
The results suggest that: 
1) mFACEs is better than competing methods for analyzing the longitudinal biomarkers.
2) exploiting the correlations between the biomarkers improve prediction.

\begin{figure}[htp]
	\centering
	\scalebox{0.3}{
		\includegraphics[]{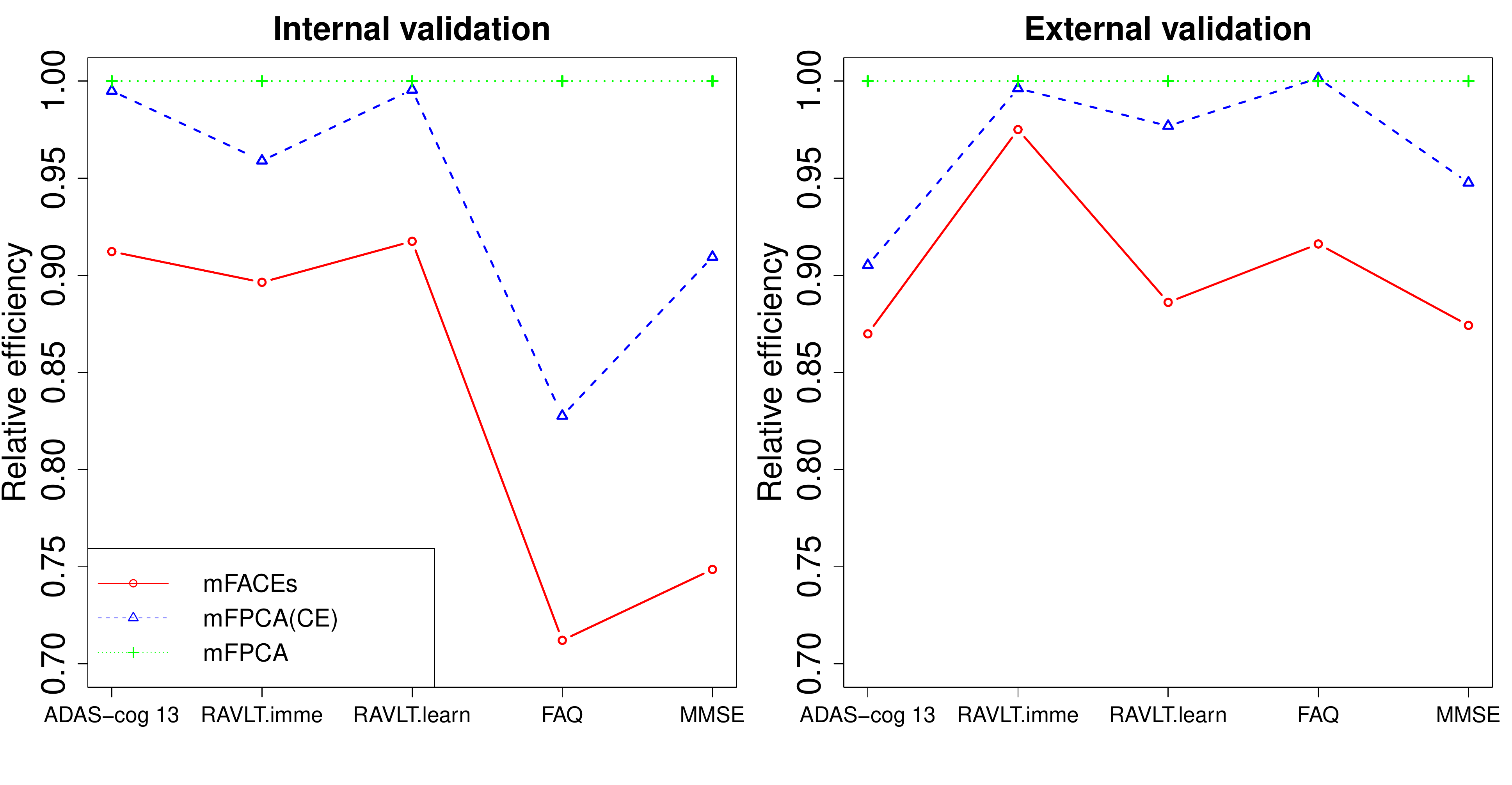}
	}
	\caption{\label{fig:ADNI_valid} The internal and external prediction validations for the ADNI longitudinal makers.} 
\end{figure}

%%%%%%%%%%%%%%%%%%%%%%%%%%%%%%%%%%%%%%%%%%%%%%%%%%%%
\section{Discussion} \label{sec:discussion}
%%%%%%%%%%%%%%%%%%%%%%%%%%%%%%%%%%%%%%%%%%%%%%%%%%%%
The prevalence of multivariate functional data has sparked much research interests in recent years. 
However, covariance estimation for multivariate sparse functional data remains underdeveloped.
We proposed a new method, mFACEs, and its features include:
1) a covariance smoothing framework is proposed to tackle multivariate sparse functional data;
2) an automatic and fast fitting algorithm is adopted to ensure the scalability of the method;
3) eigenfunctions and eigenvalues can be obtained through a one-time spectral decomposition, and eigenfunctions can be easily evaluated at any sampling points;
4) a multivariate extension of the conditional expectation approach \citep{Yao:05a} is derived to exploit correlations between outcomes.
The simulation study and the data example showed that mFACEs could better capture between-function correlations and thus give improved 
principal component analysis and curve prediction.

%The number of sub-covariances increases quadratically in $p$, and smoothing each sub-covariance sequentially may become computationally prohibitive when $p$ is large. In that case, our method can be easily paralleled to leverage the multicore computing resources.

When the magnitude of functional data are quite different, one may first normalize the functional data, as recommended by \cite{Chiou:14}. 
One method of normalization is to rescale the functional data  using the estimated variance function $\widehat{C}_{kk}(t,t)^{-1/2}$ as in \cite{Chiou:14} and \cite{jacques2014model}. An alternative method is to use a global rescaling factor like $\left(\int \widehat{C}_{kk}(t,t) dt \right)^{-1/2}$ as in \cite{Happ:17}. Both methods can be easily incorporated into our proposed method. In our data analysis, we find that the results with normalization are very close to those without normalization, thus we present the results without normalization. 

Because multivariate FPCA is more complex than univariate FPCA, 
weak correlations between the functions and small sample size may offset the benefit of conducting multivariate FPCA, 
see Section $7.3$ in \cite{wong2017partially}. 
Thus, it is of future interest to develop practical tests to determine if correlations between multivariate functional data are different from 0.

The mFACEs method has been implemented in an R package \texttt{mfaces} and will be submitted to CRAN for public access.
%which will be incorporated into the \texttt{R} package \texttt{face} \citep{Xiao:16c}.

%%%%%%%%%%%%%%%%%%%%%%%%%%%%%%%%
\section*{Appendices}
%%%%%%%%%%%%%%%%%%%%%%%%%%%%%%%%
\subsection*{Appendix A: Mean Function Estimation}
The smooth mean function $\mu^{(k)}(t)$ is approximated by the B-spline basis functions $f^{(k)}(t) = \sum_{1 \leq \gamma \leq c}\alpha_{\gamma}^{(k)}B_{\gamma}(t)$, where $\balpha_k = \left\{\alpha_1^{(k)},\ldots, \alpha_c^{(k)}\right\}^{\top}\in\real^c$ is a coefficient vector. 
For simplicity, we use the same set of B-spline bases as in the covariance function estimation.
%$\{B_1(\cdot), \cdots, B_c(\cdot)\}$ is the collection of B-spline basis functions in the unit interval, and $c$ is the number of equally-spaced interior knots plus the order (degree plus $1$) of the B-splines.
We carry out univariate smoothing for each response using {\it P}-splines \citep{Eilers:96}
and $\balpha_k$ is obtained by minimizing
%Again, a penalty term on the second derivative of the spline coefficients is imposed to avoid over-fitting as we use a relatively large number of knots. 
%Let $\|\cdot\|$ denote the Euclidean norm and let $\bD \in \real^{c \times (c-2)}$ denote a second-order differencing matrix as in . 
%Then, we estimate 
\begin{eqnarray}\label{eq:mean}
	\sum_{i=1}^n \sum_{j=1}^{m_{ik}} \left\{ f^{(k)}\left(t_{ij}^{(k)}\right) - y_{ij}^{(k)} \right\}^2 + \tau_{k} \|\bD\balpha_k\|^2,
\end{eqnarray} 
where $\tau_k$ is a nonnegative smoothing parameter to be selected by leave-one-subject-out cross validation
for the $k$th response.
Note that the penalty term is essentially equivalent to the integrated squared second derivative of $f^{(k)}$.
Denote the minimizer of~\eqref{eq:mean} by $\widehat{\balpha}_k$, then the estimate
of the mean function $\mu^{(k)}(t)$ is given by $\widehat{\mu}^{(k)}(t) = \sum_{1 \leq \gamma \leq c}\widehat{\alpha}_{\gamma}^{(k)}B_{\gamma}(t)$.
%Let $\by_{i}^{(k)} = \left(y_{i1}^{(k)},\ldots,y_{i m_i}^{(k)}\right)^{\top}$, $\by^{(k)} = \left({\by_{1}^{(k), \top}}, \ldots, {\by_{n}^{(k), \top}}\right)^{\top}$.
%Define $\bb_i$ as the basis matrix $\bb_i = [\bb(t_{i1}),\ldots, \bb(t_{im_i})]^{\top}$ and $\bb = [\bb_1^{\top},\ldots, \bb_n^{\top}]^{\top}$. Then, \eqref{eq:mean} can be rewritten as 
%\begin{eqnarray} \label{eq:meanPLS}
%\left(\by^{(k)} - \bb\balpha_k\right)^{\top}\left(\by^{(k)} - \bb\balpha_k\right) + \tau_k \balpha_k^{\top}\bD\bD^{\top}\balpha_k,
%\end{eqnarray}
%where $\tau_k$ is a smoothing parameter.
%It follows that $\widehat{\balpha}_k$ can be given by $\widehat{\balpha}_k = (\bb^{\top}\bb + \tau_k \bD\bD^{\top})^{-1}\bb^{\top}\by^{(k)}$, and the smoothing parameter $\tau_k$ can be selected according to Proposition~\ref{eq:igcv}.
%Finally, let $\hat{\alpha}_{\gamma}^{(k)}$ denote the $\gamma$th element of $\widehat{\balpha}_k$, then the estimate of the mean function $\mu^{(k)}(t)$ is given by $\sum_{1 \leq \gamma \leq c}\hat{\alpha}_{\gamma}^{(k)}B_{\gamma}(t)$.

\subsection*{Appendix B: Proofs of Propositions~\ref{eq:eigen} and~\ref{eq:igcv}}
\begin{proof}[Proof of Proposition~\ref{eq:eigen}]
	Define  $\bbtilde(t) = \bG^{-\frac{1}{2}}\bb(t)$, then $\int \bbtilde(t)\bbtilde(t)^{\top} dt = \bI$.
	Define $\check{\bTheta}_{kk'}(t) = \bG^{\frac{1}{2}}\bTheta_{kk'}\bG^{\frac{1}{2}}$.
	According to \eqref{eq:cov_operator}, 
	\begin{eqnarray*}
		d_{\ell} \Psi_{\ell}^{(k)}(s) &=&
		\sum_{k^{\prime}=1}^p \int \bb(s)^{\top} \bTheta_{k k^{\prime}} \bb(t) \Psi_{\ell}^{(k^{\prime})}(t) dt \\
		%&=& \sum_{k^{\prime}=1}^p  \bb(s)^{\top} \bTheta_{k k^{\prime}} \int \bb(t) \Psi_{\ell}^{(k^{\prime})}(t) dt \\
		&=&\bbtilde(s)^{\top} \sum_{k^{\prime}=1}^p   \check{\bTheta}_{k k^{\prime}} \int\bbtilde(t)\Psi_{\ell}^{(k')}(t) dt.
	\end{eqnarray*}
	Thus, $ \Psi_{\ell}^{(k)}(s) =  \bbtilde(s)^{\top} \bu_{\ell}^{(k)}$ with $\bu_{\ell}^{(k)} = d_{\ell}^{-1}\left\{\sum_{k^{\prime}=1}^p   \check{\bTheta}_{k k^{\prime}} \int\bbtilde(t)\Psi_{\ell}^{(k')}(t) dt\right\}\in\real^{c}$ and $\bu_{\ell} = (\bu_{\ell}^{(1),\top},\ldots, \bu_{\ell}^{(p),\top})^{\top}\in\real^{pc}$.
	Since $\sum_{k=1}^p\int \Psi_{\ell}^{(k)}(t)\Psi_{\ell'}^{(k)}(t) dt = 1_{\{\ell=\ell'\}}$, we derive that
	\begin{equation}
		\label{eq:ulk}
		\bu_{\ell}^{\top}\bu_{\ell'} = \sum_{k=1}^p \bu_{\ell}^{(k),\top}\bu_{\ell'}^{(k)} = 1_{\{\ell=\ell'\}}.
	\end{equation}
	By~\eqref{eq:mult_cov}, 
	$$C_{kk'}(s,t) = \bbtilde(s)^{\top}\check{\bTheta}_{kk'}\bbtilde(t) = \bbtilde(s)^{\top}\left\{\sum_{\ell\geq1}d_{\ell} \bu_{\ell}^{(k)}\bu_{\ell}^{(k'),\top}\right\}\bbtilde(t),$$
	which gives
	$$
	\check{\bTheta}_{kk'} = \sum_{\ell\geq1} d_{\ell}\bu_{\ell}^{(k)}\bu_{\ell}^{(k')}.
	$$
	As $1 \leq k, k^{\prime} \leq p$, the above is equivalent to
	\begin{eqnarray*}
		\underbrace{\begin{pmatrix} \check{\bTheta}_{11} & \ldots & \check{\bTheta}_{1p} \\ \vdots & \ddots & \vdots \\ \check{\bTheta}_{p1} & \ldots & \check{\bTheta}_{pp} \end{pmatrix}}_{:=\check{\bTheta}}
		&=& \sum_{\ell\geq 1} d_{\ell} \bu_{\ell}\bu_{\ell}^{\top}.
	\end{eqnarray*}
	Because of \eqref{eq:ulk},  $\bu_{\ell}$s are orthonormal eigenvectors of $\check{\bTheta}$ with  $d_{\ell}$s the corresponding eigenvalues.
	The proof is now complete.
\end{proof}

%\subsection*{Appendix C: Proof of Proposition~\ref{eq:igcv}}
\begin{proof}[Proof of Proposition~\ref{eq:igcv}]
	%\begin{lem}[Page 241, \citealt{Seber:07}]
	%	\label{lem:tr}
	%	Let $\bA$, $\bB$ and $\bC$ and $\bD$ be compatible matrices. Then
	%	\[
	%	\tr(\bA\bB\bC\bD) = \ve(\bD)^{\top} (\bA\otimes \bC^{\top})\ve(\bB^{\top}).
	%	\]
	%\end{lem}
	
	%As in \cite{Xu:12}, we can further simplify iCV in Lemma \ref{lem:icv} by using the approximation $(\bI_{n_i}-\bS_{ii}^{\top})^{-1}(\bI_{n_i}-\bS_{ii})^{-1} = \bI_{n_i} + \bS_{ii} + \bS_{ii}^{\top}$. This approximation leads to iGCV,
	
	By~\eqref{eq:igcv_a1},
	\begin{equation}\label{eq:igcv_a1_1}
		\iGCV  =\left\|\bChat\right\|^2 - 2\bbf^{\top}\bSigma^{-1}\bbf + \bbf^{\top}\bSigma^{-2}\bbf
		+ 2\sum_{i=1}^n\left(\bL_i\bSigma^{-1}\bbf-\bbf_i\right)^{\top}\bSigma^{-1}\left(\bL_i\bSigma^{-1}\bbf-\bbf_i\right).
	\end{equation}
	Since $\bSigma^{-1} = \bU\text{diag}(\bdtilde)\bU^{\top}$, we have
	\begin{equation}
		\label{eq:prop2:eq1}
		\bbf^{\top}\bSigma^{-1}\bbf = \bftilde^{\top}\text{diag}(\bdtilde)\bftilde = \bdtilde^{\top}(\bftilde\odot\bbftilde).
	\end{equation}
	Similarly,
	\begin{equation}
		\label{eq:prop2:eq2}
		\bbf^{\top}\bSigma^{-2}\bbf = \bftilde^{\top}\text{diag}(\bdtilde^2)\bftilde = (\bftilde\odot\bdtilde)^{\top}(\bftilde\odot\bdtilde).
	\end{equation}
	Next we derive that
	\begin{align*}
		&\left(\bL_i\bSigma^{-1}\bbf-\bbf_i\right)^{\top}\bSigma^{-1}\left(\bL_i\bSigma^{-1}\bbf-\bbf_i\right)\\
		=& \left(\bU^{\top}\bL_i\bSigma^{-1}\bbf-\bU^{\top}\bbf_i\right)^{\top}\text{diag}(\bdtilde)\left(\bU^{\top}\bL_i\bSigma^{-1}\bbf-\bU^{\top}\bbf_i\right)\\
		=&\left(\bLtilde_i\text{diag}(\bdtilde)\bftilde-\bftilde_i\right)^{\top}\text{diag}(\bdtilde)\left(\bLtilde_i\text{diag}(\bdtilde)\bftilde-\bftilde_i\right).
	\end{align*}
	It follows that
	\begin{equation}\label{eq:prop2:eq3}
		\begin{split}
			&\left(\bL_i\bSigma^{-1}\bbf-\bbf_i\right)^{\top}\bSigma^{-1}\left(\bL_i\bSigma^{-1}\bbf-\bbf_i\right)\\
			=&\bdtilde^{\top}\left[ \left\{\bLtilde_i(\bftilde\odot\bdtilde)\right\}\odot
			\left\{\bLtilde_i(\bftilde\odot\bdtilde)\right\}
			+ (\bftilde_i\odot \bftilde_i)\right] - 2\bdtilde^{\top} \left\{(\bftilde_i\bftilde^{\top})\odot\bLtilde_i\right\} \bdtilde.
		\end{split}
	\end{equation}
	Combining~\eqref{eq:igcv_a1_1}, \eqref{eq:prop2:eq1}, \eqref{eq:prop2:eq2} and \eqref{eq:prop2:eq3},
	the proof is complete.
\end{proof}

%\subsection*{Appendix D: Additional Figures}

%
%\begin{figure}[H]
%	\centering
%	\scalebox{0.45}{
%		\includegraphics[]{pred_subj}
%	}
%	\caption{\label{fig:pred_subj} Scenario 2: One example of curve prediction. The red solid lines represent predicted subject-specific curves, and the blue dotted lines are associated $95\%$ confidence bands.}
%\end{figure}
%
%

\subsection*{Appendix C: A Lemma}
\begin{lem}\label{lem1}
	The covariance operator with the covariance functions defined in Section 4.1 is positive semi-definite.
\end{lem}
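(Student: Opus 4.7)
The plan is to verify positive semi-definiteness directly from the characterization given in Section~2.1: for any $\ba = (a_1,\ldots,a_p)^\top \in \real^p$ and any points $t_1,\ldots,t_q \in \T$, the matrix $[C_{\ba}(t_i,t_j)]_{1\leq i,j\leq q}$ must be PSD, i.e., for any scalars $c_1,\ldots,c_q$,
\[
\sum_{i,j}\sum_{k,k'} c_i c_j\, a_k a_{k'}\, C_{kk'}(t_i,t_j) \geq 0.
\]
The key structural feature to exploit is that all $C_{kk'}$ share a common tensor-product form in the eigenbases $\bPhi_k$, and the only distinction between the auto-terms ($k=k'$) and the cross-terms ($k\neq k'$) is the scalar $\rho$.

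First I would write the auto-covariance as $C_{kk}(s,t) = \bPhi_k(s)^\top \bLambda_{kk}^{1/2} \bLambda_{kk}^{1/2} \bPhi_k(t)$, so that the full specification becomes
\[
C_{kk'}(s,t) = \bigl\{\rho + (1-\rho)\, 1_{\{k=k'\}}\bigr\}\, \bPhi_k(s)^\top \bLambda_{kk}^{1/2} \bLambda_{k'k'}^{1/2} \bPhi_{k'}(t)
\]
uniformly in $k, k'$. Then I would introduce the auxiliary vectors $\mathbf{v}_k := a_k \sum_{i=1}^q c_i\, \bLambda_{kk}^{1/2} \bPhi_k(t_i) \in \real^3$, which collapses the double sum over $(i,j)$ and expresses the quadratic form as $\sum_{k,k'} \{\rho + (1-\rho)\, 1_{\{k=k'\}}\}\, \mathbf{v}_k^\top \mathbf{v}_{k'}$.

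The decisive algebraic step is the identity
\[
\sum_{k,k'} \bigl\{\rho + (1-\rho)\, 1_{\{k=k'\}}\bigr\}\, \mathbf{v}_k^\top \mathbf{v}_{k'} \;=\; \rho \,\Bigl\|\sum_{k=1}^p \mathbf{v}_k\Bigr\|^2 + (1-\rho)\sum_{k=1}^p \|\mathbf{v}_k\|^2,
\]
which is obviously nonnegative for every $\rho \in [0,1]$. Since this holds for all choices of $\ba$, $\mathbf{c}$, and $\{t_i\}$, the PSD property follows, completing the argument.

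There is no substantive obstacle beyond the bookkeeping just outlined; the one place to be slightly careful is folding the $k=k'$ case into the uniform tensor-product form so that the convex combination of the ``fully coupled'' quantity $\|\sum_k \mathbf{v}_k\|^2$ and the ``diagonal'' quantity $\sum_k \|\mathbf{v}_k\|^2$ emerges cleanly. As an alternative I would note in passing a probabilistic construction: let $\bxi, \bxi_1,\ldots,\bxi_p$ be independent mean-zero random vectors in $\real^3$ with identity covariance, and set $x^{(k)}(t) = \sqrt{\rho}\,\bPhi_k(t)^\top \bLambda_{kk}^{1/2}\bxi + \sqrt{1-\rho}\,\bPhi_k(t)^\top \bLambda_{kk}^{1/2}\bxi_k$. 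A one-line covariance calculation reproduces exactly the $C_{kk'}$ specified in Section~4.1, so $\bC(s,t)$ is the covariance of a bona fide process and the PSD property is automatic.
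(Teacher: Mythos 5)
Your main argument is correct and is essentially the paper's own proof: the paper likewise splits the coefficient $\rho + (1-\rho)\,1_{\{k=k'\}}$ into a convex combination, writing the covariance of $\ba^{\top}\bx$ as $\rho\,\bPsi(s)^{\top}\bPsi(t) + (1-\rho)\sum_{k} a_k^2\,\bPhitilde_k(s)^{\top}\bPhitilde_k(t)$ with $\bPsi=\sum_k a_k\bPhitilde_k$, which is exactly your identity $\rho\,\|\sum_k \mathbf{v}_k\|^2 + (1-\rho)\sum_k\|\mathbf{v}_k\|^2$ stated at the kernel level rather than for the finite-dimensional quadratic form. Your parenthetical probabilistic construction is also valid and slightly more self-contained, but the decisive decomposition is the same in both write-ups.
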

\begin{proof}
	Let $\ba = (a_1,\ldots, a_p)^{\top}\in\real^p$
	and $\widetilde{X} = \ba^{\top}\bx$, then $\widetilde{X}$ is a stochastic process
	with covariance function $$\text{Cov}\left\{\widetilde{X}(s), \widetilde{X}(t)\right\}
	= \sum_{k k'} a_k a_{k'} C_{kk'}(s,t)
	=\sum_{k k'} a_k a_{k'} \left( \rho + (1-\rho) 1_{\{k=k'\}}\right)
	\bPhitilde_k(s)^{\top} \bPhitilde_{k'}(t),$$
	where $\bPhitilde_k(s) = \bPhi_k(s)^{\top}\bLambda_{kk}^{\frac{1}{2}}$.
	Let $\bPsi(s) = \sum_{k=1}^p a_k \bPhitilde_k(s)$.
	Then
	$$
	\text{Cov}\left\{\widetilde{X}(s), \widetilde{X}(t)\right\}
	= \rho \bPsi(s)^{\top}\bPsi(t)
	+ (1-\rho) \sum_{k} a_k^2 \bPhitilde_k(s)^{\top}\bPhitilde_{k}(t).
	$$
	which is always positive semi-definite and the proof is complete.
\end{proof}

%\bigskip
%\begin{center}
%	{\large\bf SUPPLEMENTARY MATERIAL}
%\end{center}
%
%\begin{description}
%	
%	\item[Supplement:] Additional results of a simulation which mimics the ADNI data. (\url{http://www4.ncsu.edu/~cli9/mFACEs/supplement.pdf})
%	
%	\item[mfaces R package:] R package \texttt{mfaces} containing an open source implementation of the proposed method described in the article. (Available upon request)
%	
%	\item[Demo code:] Code to demonstrate the proposed method with a simulated dataset in Section \ref{sec:simulation}. (Available upon request)
%	
%\end{description}

\newpage
\bibliography{ref}
\bibliographystyle{chicago}

\end{document}